\begin{document}
%
\title{Optimal Power Allocation Policy over Two Identical Gilbert-Elliott Channels}

\author{\IEEEauthorblockN{Wei Jiang}
\IEEEauthorblockA{School of Information Security\\
 Engineering\\
Shanghai Jiao Tong University, China\\
Email: kerstin@sjtu.edu.cn}
\and
\IEEEauthorblockN{Junhua Tang}
\IEEEauthorblockA{School of Information Security\\
 Engineering\\
Shanghai Jiao Tong University, China\\
Email: junhuatang@sjtu.edu.cn}
\and
\IEEEauthorblockN{Bhaskar Krishnamachari}
\IEEEauthorblockA{Ming Hsieh Department \\
 of Electrical Engineering\\
Viterbi School of Engineering \\
University of Southern California\\
Email: bkrishna@usc.edu}
}


\maketitle

\newtheorem{lemma}{\textbf{Lemma}}
\newtheorem{theorem}{\textbf{Theorem}}
\newtheorem{definition}{\textbf{Definition}}
\newtheorem{observation}{\textbf{Observation}}

\begin{abstract}
We study the fundamental problem of optimal power allocation over two identical Gilbert-Elliott (Binary Markov) communication channels. Our goal is to maximize the expected discounted number of bits transmitted over an infinite time span by judiciously choosing one of the four actions for each time slot: 1) allocating power equally to both channels, 2) allocating all the power to channel 1, 3) allocating all the power to channel 2, and 4) allocating no power to any of the channels. As the channel state is unknown when power allocation decision is made, we model this problem as a partially observable Markov decision process(POMDP), and derive the optimal policy which gives the optimal action to take under different possible channel states. Two different structures of the optimal policy are derived analytically and verified by linear programming simulation. We also illustrate how to construct the optimal policy by the combination of threshold calculation and linear programming simulation once system parameters are known.
\end{abstract}


%
\IEEEpeerreviewmaketitle

\section{Introduction}
Adaptive power control is an important technique to select the transmission power of a wireless system according to channel condition to achieve better network performance in terms of higher data rate or spectrum efficiency [1],[2]. There has been some recent work on power allocation over stochastic channels [3],[4],[5]; the problem of optimal power allocation across multiple dynamic stochastic channels is challenging and remains largely unsolved from a theoretical perspective

We consider a wireless system operating on two parallel transmission channels. The two channels are statistically identical and independent of each other. We model each channel as a slotted Gilbert-Elliott channel. That is, each channel is described by a two-state Markov chain, with a bad state ``0'' and a good state ``1'' [7]. Our objective is to allocate the limited power budget to the two channels dynamically so as to maximize the expected discounted number of bits transmitted over time. Since the channel state is unknown when the decision is made, this problem is more challenging than it looks like.

Recently, several works have explored different sequential decision-making problems involving Gilbert-Elliott channels. In [8],[9], the authors consider the problem of selecting one channel to sense/access among several identical channels, formulate it as a restless multi-armed bandit problem, and show that a simple myopic policy is optimal whenever the channels are positively correlated over time. In [6], the authors study the problem of dynamically choosing one of three transmitting schemes for a single Gilbert-Elliott channel in an attempt to maximize the expected discounted number of bits transmitted. And in [10], the authors study the problem of choosing a transmitting strategy from two choices emphasizing the case when the channel transition probabilities are unknown. While similar in spirit to these two studies, our work addresses a more challenging setting involving two independent channels. In [6],[8],[9], only one channel is accessed in each time slot, while our formulation of power allocation is possible to use both channels simultaneously. In [17], a similar power allocation problem is studied. Our work in this paper has the following differences compared with the work in [17]: four power allocation actions are considered instead of 3; penalty is introduced when power is allocated to a channel in bad condition. With the introduction of one more action (using none of the two channels) and transmission penalty, the problem becomes more interesting yet more difficult to analyze.

In this paper, we formulate our power allocation problem as a partially observable Markov decision process(POMDP). We then convert it to a continuous state Markove Decision Process (MDP) problem and derive the structure of the optimal policy. Our main contributions are:(1)we formulate the problem using the MDP theory and theoretically prove the structure of the optimal policy, (2) we verify our analysis through simulation based on linear programming, (3) we demonstrate how to numerically obtain the structure of this optimal policy when system parameters are known.

The results in this paper advance the fundamental understanding of optimal power allocation over multiple dynamic stochastic channels from a theoretical perspective.

\section{Problem Formulation}
\subsection{Channel model and assumptions}
We consider a wireless communication system operating on two parallel channels. Each channel is described by a slotted Gilbert-Elliott model which is a one dimensional two-state Markov chain $G_{i,t}(i \in \{1,2\} , t \in \{1,2,...,\infty\})$:
a good state denoted by 1 and a bad state denoted by 0 ($i$ is channel index and $t$ is time slot). The state transition probabilities are: $P_r[G_{i,t}=1 | G_{i,t-1}=1] = \lambda_1$ and $P_r[G_{i,t}=1 | G_{i,t-1}=0] = \lambda_0, i \in \{1,2\}$. We assume the two channels are identical and independent of each other. Meanwhile channel state transition occurs only at the beginning of each time slot. We also assume that $\lambda_0 < \lambda_1$, which is a positive correlation assumption commonly used in the literature.

The system has a total power $P$. At the beginning of each slot, the system allocates power $P_1(t)$ to channel 1 and power $P_2(t)$ to channel 2, where $P_1(t)+P_2(t) = P$.
We assume channel state is unknown at the beginning of each time slot, thus the system needs to decide the power allocation for the two channels without knowing the channel states. If a channel is used in slot $t$, its channel state during that slot is revealed at the end of time slot $t$ through channel feedback. But if a channel is not used, its state during the elapsed time slot remains unknown.

\subsection{Power allocation strategies}
To simplify the power allocation problem, we define three power levels the system may allocate to a channel: $0,P/2,P$. If a channel in good state is allocated power $P/2$, it can transmit $R_l$ bits of data during that slot. If a channel in good state is allocated power $P$, it can transmit $R_h$ bits of data successfully. We assume $R_l < R_h < 2R_l$. At the same time, if a channel in bad state is allocated power $P/2$, it suffers $C_l$ bits of data loss. If a channel in bad state is allocated power $P$, it suffers $C_h$ bits of data loss. We assume $C_l < C_h < 2C_l$ and $R_h > C_h, R_l > C_l$.

At the beginning of each time slot, the system chooses one the following four actions: balanced, betting on channel 1, betting on channel 2 and conservative.

\emph{Balanced} (denoted by $B_b$): the system allocates power evenly on both channels, that is, $P_1(t)=P_2(t)=P/2$ for time slot $t$. This action is chosen when the system believes both channels are in good state and it is most beneficial to use both of the channels.

\emph{Betting on channel 1} (denoted by $B_1$): the system decides to ``gamble'' by allocating all the power to channel 1, that is, $P_1(t)=P,P_2(t)=0$. This occurs when the system believes that channel 1 will be in good state and channel 2 will be in bad state.

\emph{Betting on channel 2} (denoted by $B_2$): contrary to $B_1$, the system allocates all the power to channel 2, that is, $P_1(t)=0,P_2(t)=P$.

\emph{Conservative} (denoted by $B_r$): the system decides to ``play safe'' by using none of the two channels, that is, $P_1(t)=P_2(t)=0$. This action is taken when the system believes both channels will be in bad state and using any of the channels will cause data loss.

Note that in actions $B_1$, $B_2$ and $B_r$, if a channel is not used, the system will not know its state in the elapsed slot.

\subsection{Formulation of the Partially Observable Markov Decision problem}
At the beginning of each time slot, the system needs to judiciously choose one of the four power allocation actions to maximize the total discounted number of data bits transmitted over an infinite time span. Since the channel state is not observable when the choice is made, this power allocation problem is a Partially Observable Markov Decision Problem (POMDP). In [11], it shows that a sufficient statistic for determining the optimal action is the conditional probability that the channel is in good state at the beginning of the current slot given the past history, henceforth this conditional probability is called belief. We denote the belief by a two dimensional vector $\mathbf{x}_t=(x_{1,t},x_{2,t})$, where $x_{i,t}=P_r[G_{i,t}=1 | \hbar_t], i \in \{1,2\}$, $\hbar_t$ is all the history of actions and state observations prior to the beginning of current slot. Using the belief as decision variable, the POMDP problem is converted into an MDP problem with an uncountable state space $([0,1],[0,1])$ [8].

Define a policy $\pi$ as a rule that determines the action to take under different situations, that is, a mapping from the belief space to action space. Let $V^{\pi}(\mathbf{p})$ denote the expected discounted reward with initial belief $\mathbf{p}=(p_1,p_2)$, that is, $x_{1,0}=P_r[G_{1,0}=1 | \hbar_0]=p_1, x_{2,0}=P_r[G_{2,0}=1 | \hbar_0]=p_2$, with $\pi$ denoting the policy followed. With discount factor $\beta \in [0,1]$, the expected discounted reward is expressed as
\begin{equation}
V^{\pi}(\mathbf{p}) = E^{\pi}[\sum^{\infty}_{t=0} \beta^{t} g_{a_t}(\mathbf{x}_t) | \mathbf{x}_0 = \mathbf{p}],
\end{equation}
where $E^\pi$ denotes the expectation given policy $\pi$, $t$ is the time slot index, $a_t \in \{B_1,B_2,B_b,B_r\}$ represents the action taken at time t. The term $g_{a_t}(\mathbf{x}_t)$ denotes the expected immediate reward when the belief is $\mathbf{x}_t$ and action $a_t$ is chosen:
\begin{equation}
g_{a_t}(\mathbf{x}_t) = \left\{ \begin{array}{ll}
x_{1,t}(R_h + C_h) - C_h & \textrm{if $a_t=B_1$} \\
x_{2,t}(R_h + C_h) - C_h & \textrm{if $a_t=B_2$} \\
(x_{1,t}+x_{2,t})(R_l + C_l) - 2C_l & \textrm{if $a_t = B_b$} \\
0 & \textrm{if $a_t=B_r$}
\end{array}. \right.
\end{equation}
Now we define the value function $V(\mathbf{p})$ as
\begin{equation}
V(\mathbf{p}) = \max_{\pi}V^{\pi}(\mathbf{p}) \quad \forall \quad \mathbf{p} \in ([0,1],[0,1]).
\end{equation}
A policy is stationary if it is a function mapping the state space $([0,1],[0,1])$ into action space $\{B_1,B_2,B_b,B_r\}$. Ross proved in [12](Th.6.3) that there exists a stationary policy $\pi^{*}$ such that $V(\mathbf{p})=V^{\pi^*}(\mathbf{p})$, and the value function $V(\mathbf{p})$ satisfies the Bellman equation
\begin{equation}
V(\mathbf{p}) = \max_{a \in \{B_1,B_2,B_b,B_r\}} \{V_a(\mathbf{p})\},
\end{equation}
where $V_a(\mathbf{p})$ denotes the value acquired when the belief is $\mathbf{p}$ and action $a$ is taken. $V_a(\mathbf{p})$ is given by
\begin{equation}
V_a(\mathbf{p}) = g_a(\mathbf{p}) + \beta E^y[V(y) | \mathbf{x}_0=\mathbf{p},a_0=a],
\end{equation}
where y denotes the next belief after action $a$ is taken when the initial belief is $\mathbf{p}$. $V_a(\mathbf{p})$ for the four actions is derived as follows.

a) \emph{Balanced($B_b$)}: If this action is taken with initial belief $\mathbf{p}=(p_1,p_2)$, the immediate reward is $p_1R_l+p_2R_l$ and the immediate loss is $(1-p_1)C_l+(1-p_2)C_l$. Since both channels are used, their states during the current slot are revealed at the end of current time slot. Therefore with probability $p_1$ channel 1 will be in good state hence the belief of channel 1 at the beginning of the next slot will be $\lambda_1$. Likewise, with probability $1-p_1$ channel 1 will be in bad state thus the belief in the next slot will be $\lambda_0$. Since both channels are identical, channel 2 has similar belief update. Consequently, the value function when action $B_b$ is taken can be expressed as
\begin{equation}
\label{eqn:VBb}
\begin{array}{ll}
  & V_{B_b}(\mathbf{p}) \\
= & (p_1+p_2)(R_l+C_l)-2C_l \\
+ & \beta[(1-p_1)(1-p_2)V(\lambda_0,\lambda_0) + p_1p_2V(\lambda_1,\lambda_1)\\
+ & p_1(1-p_2)V(\lambda_1,\lambda_0) + (1-p_1)p_2V(\lambda_0,\lambda_1)]
\end{array}.
\end{equation}

b) \emph{Betting on channel 1($B_1$)}: If this action is taken with initial belief $\mathbf{p}=(p_1,p_2)$, the immediate reward is $p_1R_h$, and the immediate loss is $(1-p_1)C_h$. Since channel 2 is not used, its channel state in the current slot remains unknown. Therefore the belief of channel 2 in the next time slot is calculated as
\begin{equation}
T(p_2) = (1-p_2)\lambda_0 + p_2\lambda_1 = \alpha p_2 + \lambda_0,
\end{equation}
where $\alpha=\lambda_1-\lambda_0$. Consequently, the value function when action $B_1$ is taken can be expressed as
\begin{equation}
\label{eqn:VB1}
\begin{array}{ll}
  & V_{B_1}(\mathbf{p}) \\
= & (R_h+C_h)p_1 - C_h  \\
+ & \beta[p_1V(\lambda_1,T(p_2)) + (1-p_1)V(\lambda_0,T(p_2))]
\end{array}.
\end{equation}

c) \emph{Betting on channel 2($B_2$)}: Similar to action $B_1$, the value function when actin $B_2$ is taken can be expressed as
\begin{equation}
\label{eqn:VB2}
\begin{array}{ll}
  & V_{B_2}(\mathbf{p}) \\
= & (R_h+C_h)p_2 - C_h  \\
+ & \beta[p_2V(T(p_1),\lambda_1) + (1-p_2)V(T(p_1),\lambda_0)]
\end{array},
\end{equation}
where
\begin{equation}
T(p_1) = (1-p_1)\lambda_0 + p_1\lambda_1 = \alpha p_1 + \lambda_0.
\end{equation}

d) \emph{Conservative($B_r$)}: If this action is taken, both immediate reward and loss are 0. Since none of the channel is used, their belief at the beginning of the next slot is given by
\begin{equation}
T(p_i) = (1-p_i)\lambda_0 + p_i\lambda_1 = \alpha p_i + \lambda_0, \quad i \in \{1,2\}.
\end{equation}
Consequently, the value function when action $B_r$ is taken can be expressed as
\begin{equation}
\label{eqn:VBr}
V_{B_r}(\mathbf{p})= \beta V(T(p_1),T(p_2)).
\end{equation}

Finally, the Bellman equation for our power allocation problem reads as
\begin{equation}
V(\mathbf{p}) = \max \{V_{B_b}, V_{B_1}, V_{B_2}, V_{B_r}\}.
\end{equation}

\section{Structure of the Optimal Policy}
From the discussion in the previous section, we understand that an optimal policy exists for our power allocation problem. In this section, we try to derive the optimal policy by first looking at the features of its structure.

\subsection{Properties of value function}
\begin{lemma}
$V_a(\mathbf{p}), a \in \{B_b,B_1,B_2,B_r\}$ is affine in both $p_1$ and $p_2$ and the following equalities hold:
\begin{displaymath}
\begin{split}
V_a(cp+(1-c)p', p_2) = cV_a(p,p_2) + (1-c)V_a(p',p_2) \\
V_a(p_1, cp+(1-c)p') = cV_a(p_1,p) + (1-c)V_a(p_1,p')
\end{split},
\end{displaymath}
where $0 \le c \le 1$ is a constant, and $f(x)$ is said to be affine with respect to $x$ if $f(x)=ax+c$ with constant $a$ and $c$.
\end{lemma}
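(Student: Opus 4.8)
The plan is to prove the lemma by value iteration together with an induction on the iteration index. Let $V^{(0)}\equiv 0$ and, for $n\ge 0$, define $V^{(n+1)}=\max_{a}V^{(n)}_{a}$, where $V^{(n)}_{a}$ denotes the right-hand side of \eqref{eqn:VBb}, \eqref{eqn:VB1}, \eqref{eqn:VB2} or \eqref{eqn:VBr} with every occurrence of $V$ replaced by $V^{(n)}$. Because the immediate reward $g_{a}$ is bounded and $\beta<1$, the Bellman operator is a contraction, so $V^{(n)}\to V$ uniformly on $([0,1],[0,1])$ and hence $V^{(n)}_{a}\to V_{a}$ uniformly for each action $a$. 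Since a uniform limit of functions that are affine in a given coordinate is again affine in that coordinate, it suffices to show, for every $n$, that each $V^{(n)}_{a}$ is affine in $p_1$ and in $p_2$.

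The base case $n=0$ is immediate: $V^{(0)}\equiv 0$ makes every $V^{(0)}_{a}$ equal to the immediate reward $g_{a}$, which is affine in each of $p_1,p_2$ by inspection of its definition. For the inductive step, fix one coordinate and read off the four formulas with $V^{(n)}$ in place of $V$. In \eqref{eqn:VBb} the continuation is a linear combination of the four numbers $V^{(n)}(\lambda_{i},\lambda_{j})$ with weights that are bilinear in $(p_1,p_2)$, so fixing $p_2$ leaves an affine function of $p_1$ (and symmetrically); this case needs nothing beyond the fact that after a balanced action the belief lands on one of four fixed corners. In \eqref{eqn:VB1} the $p_1$-dependence is affine for the same reason, since $V^{(n)}(\lambda_{i},T(p_2))$ is a constant once $p_2$ is fixed, and the $p_2$-dependence (for fixed $p_1$) is affine because $T$ is affine and, by the induction hypothesis, $V^{(n)}(\lambda_{i},\cdot)$ is affine on $T([0,1])=[\lambda_0,\lambda_1]$; \eqref{eqn:VB2} is symmetric; and \eqref{eqn:VBr} is affine in each coordinate because $V^{(n)}_{B_r}(\mathbf{p})=\beta V^{(n)}(T(p_1),T(p_2))$ is the composition of the coordinatewise-affine $V^{(n)}$ with the coordinatewise-affine map $(p_1,p_2)\mapsto(T(p_1),T(p_2))$.

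The step I expect to be the main obstacle is closing the induction, namely deducing that $V^{(n+1)}=\max_{a}V^{(n)}_{a}$ is again affine in each coordinate from the fact that each $V^{(n)}_{a}$ is: a pointwise maximum of affine functions is in general only convex, not affine, so this does not come for free. Pushing it through requires exploiting extra structure of the problem — for instance that after the first slot the belief stays in the rectangle $[\lambda_0,\lambda_1]\times[\lambda_0,\lambda_1]$, together with an argument that the action attaining the maximum does not change across that region in a way that would introduce a kink, or a strengthened joint induction hypothesis that pins down the shape of $V^{(n)}$ finely enough to identify the upper envelope. A cleaner alternative is to prove separately (again by value iteration) that $V$ itself is affine in each coordinate on the reachable belief region, after which the lemma follows in one line from the four formulas above; in that case the entire difficulty is concentrated in that auxiliary claim. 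I would start from the unconditional parts — all of $V_{B_b}$, the $p_1$-direction of $V_{B_1}$, and the $p_2$-direction of $V_{B_2}$ — and then focus on the coordinatewise affineness of $V$.
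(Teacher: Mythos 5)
Your proposal correctly disposes of the easy directions --- $V_{B_b}$ in both coordinates, $V_{B_1}$ in $p_1$, $V_{B_2}$ in $p_2$, and the fact that the balanced action sends the belief to one of four fixed corners --- and these parts match the paper. But the lemma is not actually proved: the entire content of the statement is the step you defer, namely that $V_{B_1}$ is affine in $p_2$ (and symmetrically $V_{B_2}$ in $p_1$, and $V_{B_r}$ in both coordinates). Your induction hypothesis gives affineness of each $V^{(n)}_a$, whereas the inductive step needs affineness of $V^{(n)}(\lambda_i,\cdot)=\max_a V^{(n)}_a(\lambda_i,\cdot)$ on $[\lambda_0,\lambda_1]$, and a pointwise maximum of affine functions is in general only convex. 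You identify this obstacle explicitly and list candidate remedies (a strengthened joint hypothesis, or proving coordinatewise affineness of $V$ itself on the reachable region), but you carry none of them out, so as written the proposal establishes only the trivial parts of the lemma.

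For comparison, the paper attacks exactly this point by unrolling the Bellman recursion rather than by value iteration: it expands $V(c,T(p_2))$, $c\in\{\lambda_0,\lambda_1\}$, according to which action is optimal there; if that action is $B_b$ or $B_2$ the resulting expression is affine in $T(p_2)$ and hence in $p_2$; if it is $B_1$ or $B_r$ it substitutes \eqref{eqn:VB1lam0Tp2} or \eqref{eqn:VBrlam0p2} and recurses on terms of the form $V(\cdot,T^n(p_2))$; and if the $B_1$/$B_r$ branch is taken forever, $T^n(p_2)\to\lambda_0/(1-\alpha)$, so the limiting contribution is constant in $p_2$. Note that this argument implicitly assumes the maximizing action at $(c,T^n(p_2))$ does not switch as $p_2$ ranges over $[0,1]$ --- which is precisely the ``kink'' issue you isolated, and the paper does not justify that uniformity. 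So you have correctly located the crux of the lemma, but you have supplied neither the paper's unrolling argument nor a completed alternative, and the proposal therefore has a genuine gap.
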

\begin{proof}
It is clear from (\ref{eqn:VBb}) that $V_{B_b}$ is affine in $p_1$ and $p_2$. Also it is obvious that $V_{B_1}$ is affine in $p_1$ and $V_{B_2}$ is affine in $p_2$ from (\ref{eqn:VB1}) and (\ref{eqn:VB2}), respectively. Next, we will prove that $V_{B_1}$ is affine in $p_2$.

Let's look at the right side of equation \ref{eqn:VB1}. The first and second terms are not related to $p_2$ so this part is affine in $p_2$. For the third term, the main part $V(c,T(p_2))$ ($c \in \{\lambda_0,\lambda_1\}$) takes one of the following four forms: $V_{B_b}(c,T(p_2)), V_{B_2}(c,T(p_2)), V_{B_1}(c,T(p_2))$ or $V_{B_r}(c,T(p_2))$. The first form is affine in $p_2$ because $V_{B_b}(c,T(p_2))$ is affine in $T(p_2)$ and $T(p_2)=\alpha p_2 + \lambda_0$ is affine in $p_2$. Similarly the second form $V_{B_2}(c,T(p_2))$ is affine in $T(p_2)$ thus also affine in $p_2$. For the latter two forms $V_{B_1}(c,T(p_2))$ and $V_{B_r}(c,T(p_2))$, they can be written as:
\begin{equation}
\label{eqn:VB1lam0Tp2}
\begin{array}{ll}
 & V_{B_1}(c,T(p_2)) \\
=& c(R_h+C_h)-C_h \\
+& \beta c V(\lambda_1,T^2(p_2)) + \beta (1-c)V(\lambda_0,T^2(p_2))
\end{array},
\end{equation}
or
\begin{equation}
\label{eqn:VBrlam0p2}
 V_{B_r}(c,T(p_2)) = V_{B_r}(T(c),T^2(p_2)),
\end{equation}
where $T^n(p)=T^{(n-1)}(T(p))=\frac{\lambda_0}{1-\alpha}(1-\alpha^n)+\alpha^np$. 
Since $T^n(p_2)$ is affine in $p_2$, 
(\ref{eqn:VB1lam0Tp2}) is affine in $p_2$ as soon as $V(\lambda_1,T^2(p_2))$ takes the form $V_{B_b}(\lambda_1,T^n(p_2))$ or $V_{B_2}(\lambda_1,T^n(p_2))$, and $V(\lambda_0,T^2(p_2))$ takes the form $V_{B_b}(\lambda_0,T^n(p_2))$ or $V_{B_2}(\lambda_0,T^n(p_2))$, $n = 2,3,\cdots$, which is affine in $p_2$. If $V(\lambda_1,T^2(p_2))$ keeps taking the form $V_{B_1}(\lambda_1,T^n(p_2))$ or $V_{B_r}(\lambda_1,T^n(p_2))$ till $n$ goes to infinity, it will eventually become $V_{B_1}(\lambda_1,\frac{\lambda_0}{1-\alpha})$ or $V_{B_r}(\lambda_1,\frac{\lambda_0}{1-\alpha})$ because $T^n(p_2) \to \frac{\lambda_0}{1-\alpha}$ when $n \to \infty$, which is a special case of affine in $p_2$. The same is true for the term $V(\lambda_0,T^2(p_2))$. With this we show that (\ref{eqn:VB1lam0Tp2}) is affine in $p_2$. Similarly we can prove that (\ref{eqn:VBrlam0p2}) is affine in $p_2$ thus (\ref{eqn:VB1}) is affine in $p_2$.

Using the same technique we can prove that $V_{B_2}(p_1,p_2)$ is affine in $p_1$ and $V_{B_r}$ is affine in both $p_1$ and $p_2$. With this we show that $V_a(\mathbf{p}), a \in \{B_b,B_1,B_2,B_r\}$ is affine in both $p_1$ and $p_2$, and the equalities in Lemma 1 immediately follow. This concludes the proof.
\end{proof}

\begin{lemma}
$V(\mathbf{p})$ is convex in $p_1$ and $p_2$, and the following inequalities hold:
\begin{displaymath}
\begin{array}{lll}
 V(cp+(1-c)p', p_2) &\leq& cV(p,p_2) + (1-c)V(p',p_2) \\
 V(p_1, cp+(1-c)p') &\leq& cV(p_1,p) + (1-c)V(p_1,p')
\end{array}.
\end{displaymath}
\end{lemma}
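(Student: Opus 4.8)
The plan is to obtain Lemma~2 directly from Lemma~1 together with the Bellman equation $V(\mathbf{p}) = \max\{V_{B_b}(\mathbf{p}),V_{B_1}(\mathbf{p}),V_{B_2}(\mathbf{p}),V_{B_r}(\mathbf{p})\}$. By Lemma~1 each of the four functions $V_a(\mathbf{p})$, $a\in\{B_b,B_1,B_2,B_r\}$, is affine in $p_1$ and affine in $p_2$. So I would first fix $p_2=q\in[0,1]$; then each map $p_1\mapsto V_a(p_1,q)$ has the form $k_a p_1 + c_a$ for constants $k_a,c_a$ depending on $q$, and hence $p_1\mapsto V(p_1,q)=\max_a\{k_a p_1 + c_a\}$ is a pointwise maximum of finitely many affine functions, which is convex in $p_1$. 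The argument fixing $p_1$ and varying $p_2$ is identical. Finally the two displayed inequalities are just the definition of convexity applied coordinatewise: for $0\le c\le 1$ and $\tilde p=cp+(1-c)p'$, convexity of $p_1\mapsto V(p_1,p_2)$ gives $V(\tilde p,p_2)\le cV(p,p_2)+(1-c)V(p',p_2)$, and symmetrically for the second coordinate.

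The only thing that needs a word of care is that ``convex'' here means convex in each coordinate \emph{separately}, not jointly convex on $([0,1],[0,1])$; a finite maximum of functions that are affine in a single coordinate delivers exactly this coordinatewise convexity, which is all that the statement and the subsequent structural analysis require. In this route there is essentially no obstacle, since Lemma~1 already carried out the real work (the unrolling argument showing each $V_a$ is affine in each variable).

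If one preferred not to lean on Lemma~1, I would instead run value iteration: set $V^{(0)}\equiv 0$ and $V^{(n+1)}(\mathbf{p})=\max_a\{g_a(\mathbf{p})+\beta\,E^{y}[V^{(n)}(y)\mid \mathbf{x}_0=\mathbf{p},a_0=a]\}$, and prove by induction that every $V^{(n)}$ is convex in $p_1$ and in $p_2$. The inductive step uses that $g_a$ is affine, that the belief update $T(\cdot)=\alpha(\cdot)+\lambda_0$ is affine, and then that for action $B_b$ the expectation has mixing weights affine in the fixed coordinate (so the backup stays affine, hence convex), while for $B_1,B_2,B_r$ the backup is a nonnegative combination of terms of the form $V^{(n)}(\cdot,T(\cdot))$, i.e.\ ``convex $\circ$ affine'', hence convex; a finite max of coordinatewise-convex functions is again coordinatewise convex. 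Passing to the limit $V^{(n)}\to V$ preserves convexity because a pointwise limit of convex functions is convex. In that version the one step to check carefully is the convergence $V^{(n)}\to V$ (which holds since the immediate rewards $g_a$ are bounded on $([0,1],[0,1])$); the direct argument above avoids even that.
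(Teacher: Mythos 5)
Your proposal is correct in substance but takes a genuinely different route from the paper: the paper's entire proof of this lemma is a citation to Smallwood and Sondik [11], where convexity of the value function of a general POMDP is established, whereas you supply actual arguments. Your second route (value iteration: show by induction that every finite-horizon $V^{(n)}$ is coordinatewise convex, using that $g_a$ is affine, that $T$ is affine, that a nonnegative combination of ``convex composed with affine'' terms is convex, and that a finite max and a pointwise limit of convex functions are convex) is essentially a self-contained reproduction of the argument behind [11], and it is the more robust of your two options; the only hypotheses to watch are $\beta<1$ for convergence of value iteration (the paper writes $\beta\in[0,1]$, but the discounted framework needs strict inequality) and, as you correctly note, that one only obtains convexity in each coordinate separately, not joint convexity, since $V_{B_b}$ contains a $p_1p_2$ term and is only bi-affine. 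Your first route --- $V=\max_a V_a$ with each $V_a$ affine in each coordinate by Lemma~1, hence $V$ is a finite max of affine functions and therefore coordinatewise convex --- is logically valid \emph{given} Lemma~1, but be aware that it inherits the fragility of Lemma~1's own proof: the term $V(\lambda_i,T(p_2))$ appearing inside $V_{B_1}$ is itself a maximum whose maximizing action can switch as $p_2$ varies, so it is in general only piecewise affine (i.e., convex) in $p_2$ rather than affine, and the paper's unrolling argument for Lemma~1 glosses over this. Deriving convexity from the (delicate) affineness claim thus risks a circularity that the value-iteration route, and the paper's citation of [11], both avoid.
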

\begin{proof}
The convexity property of the value function of any general POMDP is proved in [11] and we will use that result directly in this paper.
\end{proof}

\begin{lemma}
$V(p_1,p_2)=V(p_2,p_1)$, that is, $V(\mathbf{p})$ is symmetric with respect to the line $p_1=p_2$ in the belief space.
\end{lemma}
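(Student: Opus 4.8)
The plan is to exploit the manifest symmetry of the model under interchanging the two (statistically identical) channels, and to push that symmetry through the Bellman recursion. Let $\sigma(p_1,p_2)=(p_2,p_1)$ denote the coordinate swap on the belief space, and let $\rho$ be the involution on the action set $\{B_b,B_1,B_2,B_r\}$ that exchanges $B_1\leftrightarrow B_2$ and fixes $B_b$ and $B_r$. The first step is to record two elementary symmetries of the problem data. For the immediate reward, reading off the definition of $g_{a}$ gives $g_{B_b}(\sigma\mathbf p)=g_{B_b}(\mathbf p)$, $g_{B_r}\equiv 0$, and $g_{B_1}(\sigma\mathbf p)=g_{B_2}(\mathbf p)$, i.e. $g_{\rho(a)}(\sigma\mathbf p)=g_{a}(\mathbf p)$ for every $a$ and every $\mathbf p$. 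For the transition kernel, inspecting (\ref{eqn:VBb})–(\ref{eqn:VBr}) shows that the distribution of the next belief under action $\rho(a)$ started from $\sigma\mathbf p$ is exactly the $\sigma$-image of the distribution of the next belief under $a$ started from $\mathbf p$: under $B_b$ the belief jumps to $(\lambda_i,\lambda_j)$ with the probabilities $p_1^{(i)}p_2^{(j)}$ that transform correctly under swapping, under $B_r$ it acts coordinatewise through $T$, and $B_1$ (resp. $B_2$) reveals channel $1$ (resp. $2$) while propagating the other coordinate through $T$.

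With these in hand I would conclude by either of two equivalent routes. Operator route: define the value-iteration operator $(\mathcal H W)(\mathbf p)=\max_{a}\{g_a(\mathbf p)+\beta\,E[W(y)\mid \mathbf x_0=\mathbf p,a_0=a]\}$ on bounded functions on $[0,1]^2$ with the sup-norm. The symmetries above give: if $W\circ\sigma=W$ then $(\mathcal H W)\circ\sigma=\mathcal H W$, so $\mathcal H$ maps the (closed) set of symmetric functions into itself. Starting from $W_0\equiv 0$, which is symmetric, every iterate $W_n=\mathcal H^n W_0$ is symmetric; since $W_n\to V$ pointwise (value iteration converges to the value function by the cited results of Ross, $\beta<1$), the limit $V$ is symmetric. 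Policy route: for any stationary policy $\pi$ define its mirror $\pi'=\rho\circ\pi\circ\sigma$. Coupling the belief trajectory of $\pi$ started at $\sigma\mathbf p$ with that of $\pi'$ started at $\mathbf p$ so that at every slot the two beliefs are $\sigma$-images of each other and the matched feedback events occur, the per-slot rewards coincide by the reward symmetry, so $V^{\pi}(\sigma\mathbf p)=V^{\pi'}(\mathbf p)$. Since $\pi\mapsto\pi'$ is an involution, hence a bijection of the stationary policies, taking suprema yields $V(\sigma\mathbf p)=\sup_\pi V^{\pi}(\sigma\mathbf p)=\sup_\pi V^{\pi'}(\mathbf p)=V(\mathbf p)$, which is the claim $V(p_1,p_2)=V(p_2,p_1)$.

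The bulk of the argument is bookkeeping: verifying the reward and transition symmetries is just reading off the expression for $g_a$ and equations (\ref{eqn:VBb})–(\ref{eqn:VBr}), and the transfer of symmetry under $\mathcal H$ (or under the policy mirror) is then immediate. The one point that genuinely needs care is the limiting step in the operator route — that value iteration from $W_0\equiv 0$ converges to $V$, so that symmetry of the iterates actually transfers to $V$ — which relies on the contraction property of $\mathcal H$ for $\beta<1$ together with the bounded-reward structure here. In the policy route the corresponding delicate point is checking that the coupling is consistent for the stochastic actions $B_b,B_1,B_2$, i.e. that the matched next beliefs really are $\sigma$-images occurring with equal probability; everything else follows directly from Lemmas 1–2 and the structure already set up.
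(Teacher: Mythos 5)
Your operator route is essentially the paper's own proof: the paper argues by induction on the horizon length $n$, showing $V^n(p_1,p_2)=V^n(p_2,p_1)$ by verifying exactly the reward and transition symmetries you list (e.g.\ $V^{k+1}_{B_1}(p_1,p_2)=V^{k+1}_{B_2}(p_2,p_1)$), which is precisely iterating your operator $\mathcal{H}$ starting from $W_0\equiv 0$. Your write-up is correct and in fact slightly more careful than the paper's, since you make explicit the final limiting step (convergence of value iteration to $V$ under the contraction property for $\beta<1$) that the paper passes over silently.
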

\begin{proof}
Let $V^n(p_1,p_2)$ denote the expected reward when the decision horizon spans only n time slots. When $n=1$,
\begin{equation}
\begin{array}{ll}
 & V^1(p_1,p_2) \\
=& \max \{(p_1+p_2)(R_l+C_l)-2C_l, 0, \\
 & p_1(R_h+C_h)-C_h, p_2(R_h+C_h)-C_h \}
\end{array}.
\end{equation}
\begin{equation}
\begin{array}{ll}
 & V^1(p_2,p_1) \\
=& \max \{(p_1+p_2)(R_l+C_l)-2C_l, 0, \\
 &  p_2(R_h+C_h)-C_h, p_1(R_h+C_h)-C_h \}
\end{array}.
\end{equation}
Obviously we have $V^1(p_1,p_2) = V^1(p_2,p_1)$. Next we assume $V^k(p_1,p_2)=V^k(p_2,p_1), k \ge 1$, we now show that $V^{k+1}(p_1,p_2)=V^{k+1}(p_2,p_1)$. Since
\begin{equation}
\begin{array}{ll}
 & V^{k+1}_{B_b}(p_1,p_2) \\
=& (p_1+p_2)(R_l+C_l)-2C_l + \beta [ p_1p_2V^{k}(\lambda_1,\lambda_1) \\
+& p_1(1-p_2)V^{k}(\lambda_1,\lambda_0) + (1-p_1)p_2V^{k}(\lambda_0,\lambda_1) \\
+& (1-p_1)(1-p_2)V^{k}(\lambda_0, \lambda_0)
\end{array}.
\end{equation}
\begin{equation}
\begin{array}{ll}
 & V^{k+1}_{B_1}(p_1,p_2) \\
=& p_1(R_h+C_h)-C_h + \\
 & \beta [(1-p_1)V^{k}(\lambda_0,T(p_2)) + p_1V^{k}(\lambda_1,T(p_2))] \\
\end{array}.
\end{equation}
\begin{equation}
\begin{array}{ll}
 & V^{k+1}_{B_2}(p_1,p_2) \\
=& p_2(R_h+C_h)-C_h + \\
 & \beta [(1-p_2)V^{k}(T(p_1),\lambda_0) + p_2V^{k}(T(p_1),\lambda_1)] \\
\end{array}.
\end{equation}
\begin{equation}
\begin{array}{l}
 V^{k+1}_{B_r}(p_1,p_2) = \beta V^{k}(T(p_1),T(p_2))
\end{array}.
\end{equation}
Using the assumption that $V^k(p_1,p_2)=V^k(p_2,p_1)$, we have,
\begin{equation}
\begin{array}{ll}
 & V^{k+1}_{B_1}(p_1,p_2) \\
=& p_1(R_h+C_h)-C_h + \\
 & \beta [(1-p_1)V^{k}(T(p_2),\lambda_0) + p_1V^{k}(T(p_2),\lambda_1)] \\
=& V^{k+1}_{B_2}(p_2,p_1)
\end{array}.
\end{equation}
Similarly, we have $V^{k+1}_{B_2}(p_1,p_2)=V^{k+1}_{B_1}(p_2,p_1)$, $V^{k+1}_{B_b}(p_1,p_2)=V^{k+1}_{B_b}(p_2,p_1)$ and $V^{k+1}_{B_r}(p_1,p_2)=V^{k+1}_{B_r}(p_2,p_1)$. Therefore,
\begin{equation}
\begin{array}{ll}
 & V^{(k+1)}(p_1,p_2) \\
=& \max \{V^{k+1}_{B_b}(p_1,p_2), V^{k+1}_{B_1}(p_1,p_2), \\
 & V^{k+1}_{B_2}(p_1,p_2), V^{k+1}_{B_r}(p_1,p_2),\} \\
=& \max \{V^{k+1}_{B_b}(p_2,p_1), V^{k+1}_{B_2}(p_2,p_1), \\
 & V^{k+1}_{B_1}(p_2,p_1), V^{k+1}_{B_r}(p_2,p_1),\} \\
=& V^{(k+1)}(p_2,p_1)
\end{array}.
\end{equation}
Hence we have $V(p_1,p_2)=V(p_2,p_1)$ for all $(p_1,p_2)$ in the belief space.
\end{proof}

\subsection{Properties of the decision regions of policy $\pi^{*}$}
We use $\Phi_a$ to denote the decision region of action $a$. That is, $\Phi_a$ is the set of beliefs under which it is optimal to take action $a$:
\begin{equation}
\begin{array}{r}
\Phi_a = \{(p_1,p_2) \in ([0,1],[0,1]) |  V(p_1,p_2) = V_a(p_1,p_2) \} \\
 a \in \{B_b,B_1,B_2,B_r\}
\end{array}.
\end{equation}
\begin{definition}
$\Phi_a$ is said to be contiguous along $p_1$ dimension if given $(x_1,p_2),(x_2,p_2) \in \Phi_a$, then $\forall x \in [x_1,x_2]$, we have $(x,p_2) \in \Phi_a$. Similarly, we say $\Phi_a$ is contiguous along $p_2$ dimension if given $(p_1,y_1),(p_1,y_2) \in \Phi_a$, then $\forall y \in [y_1,y_2]$, we have $(p_1,y) \in \Phi_a$.
\end{definition}
\begin{theorem}
$\Phi_a$ is contiguous in both $p_1$ and $p_2$, where $a \in \{B_b,B_1,B_2,B_r\}$.
\end{theorem}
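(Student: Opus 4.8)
The plan is to establish contiguity one coordinate at a time, playing Lemma 1 (each $V_a$ is affine in each coordinate) against Lemma 2 ($V$ is convex in each coordinate). Fix an action $a \in \{B_b,B_1,B_2,B_r\}$ and a constant $c \in [0,1]$, and restrict attention to the horizontal line $p_2 = c$. Introduce the gap function
\[
h_a(p_1) \;=\; V(p_1,c) - V_a(p_1,c), \qquad p_1 \in [0,1].
\]
By Lemma 1, $V_a(\cdot,c)$ is affine, so $-V_a(\cdot,c)$ is affine; by Lemma 2, $V(\cdot,c)$ is convex; hence $h_a$ is convex on $[0,1]$. Moreover the Bellman equation gives $V(p_1,c) = \max\{V_{B_b}(p_1,c),V_{B_1}(p_1,c),V_{B_2}(p_1,c),V_{B_r}(p_1,c)\} \ge V_a(p_1,c)$, so $h_a \ge 0$ on all of $[0,1]$.

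The next step is to identify the slice of the decision region with a sublevel set of $h_a$:
\[
\Phi_a \cap \{p_2 = c\} \;=\; \{\, p_1 \in [0,1] : h_a(p_1) = 0 \,\} \;=\; \{\, p_1 \in [0,1] : h_a(p_1) \le 0 \,\},
\]
where the second equality uses $h_a \ge 0$. The right-hand set is a sublevel set of a convex function, hence convex, and a convex subset of $[0,1]$ is an interval. Thus for every $c$ the set $\{p_1 : (p_1,c)\in\Phi_a\}$ is an interval, which is exactly the statement that $\Phi_a$ is contiguous along the $p_1$ dimension in the sense of Definition 1.

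For the $p_2$ dimension I would either rerun the identical argument with the coordinates exchanged — Lemma 1 also gives that $V_a(p_1,\cdot)$ is affine and Lemma 2 that $V(p_1,\cdot)$ is convex — or, more economically, invoke Lemma 3: reflection across the line $p_1 = p_2$ fixes $\Phi_{B_b}$ and $\Phi_{B_r}$ and interchanges $\Phi_{B_1}$ with $\Phi_{B_2}$, so contiguity along $p_1$ for all four regions immediately transfers to contiguity along $p_2$.

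The argument is short because the real work has already been done elsewhere: the one nontrivial ingredient is that $V_a(\cdot,c)$ is genuinely affine even for the betting actions $B_1,B_2$ in the coordinate that is \emph{not} being bet on, and that is precisely the content of the harder half of Lemma 1. Beyond that, the only fact used is the textbook observation that a sublevel set of a convex function is convex, so I do not anticipate a serious obstacle; the step most worth writing out carefully is the chain of equalities identifying $\Phi_a \cap \{p_2=c\}$ with $\{h_a \le 0\}$, together with a one-line reminder that $h_a$ inherits convexity from $V$ and affinity of $V_a$.
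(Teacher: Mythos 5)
Your proposal is correct and is essentially the paper's own argument in different packaging: the paper takes two points $(x_1,p_2),(x_2,p_2)\in\Phi_a$ and runs the chain $V(cx_1+(1-c)x_2,p_2)\le cV(x_1,p_2)+(1-c)V(x_2,p_2)=cV_a(x_1,p_2)+(1-c)V_a(x_2,p_2)=V_a(cx_1+(1-c)x_2,p_2)\le V(cx_1+(1-c)x_2,p_2)$, which is exactly your observation that the nonnegative convex function $h_a=V-V_a$ has a convex zero set. Both routes rest on the same two ingredients (Lemma 1 for affinity of $V_a$, Lemma 2 for convexity of $V$), so no further comment is needed.
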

\begin{proof}
We will prove $\Phi_{B_1}$ as an example, and the results for other actions can be proved in a similar manner. First we prove that $\Phi_{B_1}$ is contiguous in $p_1$. Let $(x_1,p_2),(x_2,p_2) \in \Phi_{B_1}$, next we show that $((cx_1+(1-c)x_2),p_2)$ is also in region $\Phi_{B_1}$, where $0 \le c \le 1$.
\begin{equation}
\begin{array}{ll}
   & V((cx_1+(1-c)x_2),p_2) \\
\le& cV(x_1,p_2) + (1-c)V(x_2,p_2) \\
  =& cV_{B_1}(x_1,p_2) + (1-c)V_{B_1}(x_2,p_2) \\
  =& V_{B_1}((cx_1+(1-c)x_2),p_2) \\
\le& V((cx_1+(1-c)x_2),p_2)
\end{array},
\end{equation}
where the first inequality comes from the convexity in lemma 2; the first equality follows from the fact that $(x_1,p_2),(x_2,p_2) \in \Phi_{B_1}$; the second equality follows from the affine linearity of $V_{B_1}(\mathbf{p})$ in $p_1$; the last inequality follows from the definition of $V(\mathbf{p})$. We have $V((cx_1+(1-c)x_2),p_2)=V_{B_1}((cx_1+(1-c)x_2),p_2)$, that is, $((cx_1+(1-c)x_2),p_2)$ is also in the region $\Phi_{B_1}$. Therefore, $\Phi_{B_1}$ is contiguous in $p_1$. Similarly $\Phi_{B_1}$ is contiguous in $p_2$.
\end{proof}

\begin{theorem}
$\Phi_{B_b}$ and $\Phi_{B_r}$ are self-symmetric with respect to the line $p_1=p_2$, that is, if $(p_1,p_2) \in \Phi_a, a \in \{B_b,B_r\}$ then $(p_2,p_1) \in \Phi_a$. $\Phi_{B_1}$ and $\Phi_{B_2}$ are mirrors with respect to the line $p_1=p_2$, that is,
if $(p_1,p_2) \in \Phi_{B_1}$ then $(p_2,p_1) \in \Phi_{B_2}$.
\end{theorem}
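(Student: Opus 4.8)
The plan is to reduce everything to the symmetry of the value function already established in Lemma 3, namely $V(p_1,p_2)=V(p_2,p_1)$, together with the corresponding exchange behaviour of the four action-value functions. First I would record how the $V_a$ transform when the two belief coordinates are swapped. Using the closed forms (\ref{eqn:VBb}), (\ref{eqn:VB1}), (\ref{eqn:VB2}), (\ref{eqn:VBr}) and Lemma 3, I claim
\begin{displaymath}
V_{B_b}(p_1,p_2)=V_{B_b}(p_2,p_1),\quad V_{B_r}(p_1,p_2)=V_{B_r}(p_2,p_1),
\end{displaymath}
\begin{displaymath}
V_{B_1}(p_1,p_2)=V_{B_2}(p_2,p_1),\quad V_{B_2}(p_1,p_2)=V_{B_1}(p_2,p_1).
\end{displaymath}
For $V_{B_r}$ this is immediate, since $V_{B_r}(p_1,p_2)=\beta V(T(p_1),T(p_2))$ and $V(T(p_1),T(p_2))=V(T(p_2),T(p_1))$ by Lemma 3. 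For $V_{B_1}$ versus $V_{B_2}$, writing $V_{B_2}(p_2,p_1)$ out from (\ref{eqn:VB2}) and applying $V(\lambda_i,T(p_2))=V(T(p_2),\lambda_i)$ recovers exactly (\ref{eqn:VB1}). For $V_{B_b}$ the only thing to check is that the cross terms $p_1(1-p_2)V(\lambda_1,\lambda_0)+(1-p_1)p_2V(\lambda_0,\lambda_1)$ are invariant under the swap, which follows from $V(\lambda_1,\lambda_0)=V(\lambda_0,\lambda_1)$ once one collects the common coefficient $p_1+p_2-2p_1p_2$. Alternatively, one may note that these identities are precisely the per-action symmetries $V^{k+1}_{B_1}(p_1,p_2)=V^{k+1}_{B_2}(p_2,p_1)$, $V^{k+1}_{B_b}(p_1,p_2)=V^{k+1}_{B_b}(p_2,p_1)$, etc., already proved inductively inside the proof of Lemma 3, and then pass to the limit $k\to\infty$.

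Given these relations, the theorem follows in one line per action. Suppose $(p_1,p_2)\in\Phi_{B_b}$, i.e.\ $V(p_1,p_2)=V_{B_b}(p_1,p_2)$. Then, by Lemma 3 and the exchange symmetry of $V_{B_b}$,
\begin{displaymath}
V(p_2,p_1)=V(p_1,p_2)=V_{B_b}(p_1,p_2)=V_{B_b}(p_2,p_1),
\end{displaymath}
hence $(p_2,p_1)\in\Phi_{B_b}$; the argument for $\Phi_{B_r}$ is identical. For $\Phi_{B_1}$, if $(p_1,p_2)\in\Phi_{B_1}$ then $V(p_1,p_2)=V_{B_1}(p_1,p_2)$, so
\begin{displaymath}
V(p_2,p_1)=V(p_1,p_2)=V_{B_1}(p_1,p_2)=V_{B_2}(p_2,p_1),
\end{displaymath}
which gives $(p_2,p_1)\in\Phi_{B_2}$; symmetrically $(p_1,p_2)\in\Phi_{B_2}$ implies $(p_2,p_1)\in\Phi_{B_1}$, establishing the mirror property.

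Since the logical content is short, there is no deep obstacle; the only place needing a little care is the exchange identity for $V_{B_b}$, because that expression mixes both coordinates in every term, so one must invoke $V(\lambda_1,\lambda_0)=V(\lambda_0,\lambda_1)$ (an instance of Lemma 3) rather than mere coordinate-wise affinity. If one instead follows the limiting route via the finite-horizon functions, the secondary point to mention is the interchange of limit and maximum, so that $\Phi_a$, defined through the infinite-horizon $V$, inherits the finite-horizon symmetry; this is routine because the per-stage reward is bounded and $\beta<1$, so $V^k\to V$ uniformly.
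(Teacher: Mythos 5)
Your proposal is correct and follows essentially the same route as the paper: it uses the symmetry $V(p_1,p_2)=V(p_2,p_1)$ from Lemma 3 together with the explicit forms of the $V_a$ to obtain the exchange identities $V_{B_r}(p_2,p_1)=V_{B_r}(p_1,p_2)$, $V_{B_b}(p_2,p_1)=V_{B_b}(p_1,p_2)$, and $V_{B_2}(p_2,p_1)=V_{B_1}(p_1,p_2)$, and then concludes membership of $(p_2,p_1)$ in the appropriate region exactly as the paper does. Your extra care with the cross terms in $V_{B_b}$ (invoking $V(\lambda_1,\lambda_0)=V(\lambda_0,\lambda_1)$) is a detail the paper glosses over with ``similarly,'' but it is the right justification.
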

\begin{proof}
If $(p_1,p_2) \in \Phi_{B_r}$, then we have
\begin{equation}
V(p_1,p_2) = V_{B_r}(p_1,p_2).
\end{equation}
Using lemma 3, we have
\begin{equation}
\begin{array}{ll}
 & V_{B_r}(p_2,p_1) \\
=& \beta V(T(p_2),T(p_1)) = \beta V(T(p_1),T(p_2)) \\
=& V_{B_r}(p_1,p_2) = V(p_1,p_2) = V(p_2,p_1)
\end{array},
\end{equation}
hence $(p_2,p_1)$ also belongs to $\Phi_{B_r}$. Similarly, we can show that if $(p_1,p_2) \in \Phi_{B_b}$, then $(p_2,p_1)$ also belongs to $\Phi_{B_b}$.

If $(p_1,p_2) \in \Phi_{B_1}$, then we have
\begin{equation}
\begin{array}{ll}
 & V(p_1,p_2) = V_{B_1}(p_1,p_2) \\
=& p_1(R_h+C_h)-C_h \\
+& \beta [p_1V(\lambda_1,T(p_2)) + (1-p_1)V(\lambda_0,T(p_2))]
\end{array}.
\end{equation}
Using lemma 3, we have
\begin{equation}
\begin{array}{ll}
 & V_{B_2}(p_2,p_1) \\
=& p_1(R_h+C_h)-C_h \\
+& \beta [p_1V(T(p_2),\lambda_1) + (1-p_1)V(T(p_2),\lambda_0)] \\
=& p_1(R_h+C_h)-C_h \\
+& \beta [p_1V(\lambda_1,T(p_2)) + (1-p_1)V(\lambda_0,T(p_2))] \\
=& V_{B_1}(p_1,p_2) = V(p_1,p_2) = V(p_2,p_1)
\end{array},
\end{equation}
hence $(p_2,p_1)$ belongs to $\Phi_{B_2}$ which concludes the proof.
\end{proof}

\subsection{Structure of the optimal policy}
Based on the properties discussed above, we are now ready to derive the structure of the optimal policy.

From the belief update in (\ref{eqn:VBb}), (\ref{eqn:VB1}),(\ref{eqn:VB2}) and (\ref{eqn:VBr}), it is clear that the belief state of a channel is updated to one of the following three values after any action: $\lambda_0$, $\lambda_1$, or $T(p)$, where $p$ is the current belief of a channel. For all $0 \leq p \leq 1$, $\lambda_0 \leq T(p)= \lambda_0 + (\lambda_1-\lambda_0)p \leq \lambda_1$. Since $0 \leq \lambda_0, \lambda_1 \leq 1$, the belief space is the rectangle area determined by four vertices at $(0,0), (0,1), (1,1)$ and $(1,0)$.

First we consider the four vertices and it is easy to obtain the following results,
\begin{equation}
\left\{ \begin{array}{l}
V(0,0) = V_{B_r}(0,0) \\
V(0,1) = V_{B_2}(0,1) \\
V(1,0) = V_{B_1}(1,0) \\
V(1,1) = V_{B_b}(1,1)
\end{array} \right.
\Rightarrow
\left\{ \begin{array}{l}
(0,0) \in \Phi_{B_r} \\
(0,1) \in \Phi_{B_2} \\
(1,0) \in \Phi_{B_1} \\
(1,1) \in \Phi_{B_b}
\end{array}. \right.
\end{equation}
Next we consider the four edges. On the edge $p_1=0$, the partial value functions are
\begin{equation}
\left\{ \begin{array}{lll}
V_{B_1}(0,p_2) & = & -C_h + \beta V(\lambda_0,T(p_2)) \\
V_{B_2}(0,p_2) & = &p_2(R_h+C_h)-C_h + \\
               &  &\beta [(1-p_2)V(\lambda_0,\lambda_0) + p_2V(\lambda_0,\lambda_1)] \\
V_{B_b}(0,p_2) & = &p_2(R_l+C_l)-2C_l + \\
               &  &\beta [(1-p_2)V(\lambda_0,\lambda_0) + p_2V(\lambda_0,\lambda_1)] \\
V_{B_r}(0,p_2) & = &\beta V(\lambda_0,T(p_2))
\end{array}. \right.
\end{equation}
Using our assumption $C_l<C_h<2C_l, R_h>C_h, R_l>C_l$, and convexity of value function $V(p_1,p_2)$, we have $V_{B_1}(0,p_2)<V_{B_b}(0,p_2)<V_{B_2}(0,p_2)$. With this we say that on the edge $p_1=0$, only two actions $B_2$ and $B_r$ are possible. Since $(0,0) \in \Phi_{B_r}$  and $(0,1) \in \Phi_{B_2}$, we know there exists a threshold $\rho$ such that $\forall p_2 \in [0,\rho)$, $(0,p_2) \in \Phi_{B_r}$ and $\forall p_2 \in [\rho,1], (0,p_2) \in \Phi_{B_2}$. To derive $\rho$ we define
\begin{equation}
\left\{ \begin{array}{l}
\delta_{1,\lambda_0}(p) = (1-p)V(\lambda_0,\lambda_0) + p V(\lambda_1,\lambda_0) - V(T(p),\lambda_0) \\
\delta_{1,\lambda_1}(p) = (1-p)V(\lambda_0,\lambda_1) + p V(\lambda_1,\lambda_1) - V(T(p),\lambda_1) \\
\delta_{2,\lambda_0}(p) = (1-p)V(\lambda_0,\lambda_0) + p V(\lambda_0,\lambda_1) - V(\lambda_0,T(p)) \\
\delta_{2,\lambda_1}(p) = (1-p)V(\lambda_1,\lambda_0) + p V(\lambda_1,\lambda_1) - V(\lambda_1,T(p))
\end{array}. \right.
\end{equation}
From the symmetric property of $V(p_1,p_2)$, we have $\delta_{1,\lambda_0}(p)=\delta_{2,\lambda_0}(p)=\delta_{\lambda_0}(p)$ and $\delta_{1,\lambda_1}(p)=\delta_{2,\lambda_1}(p)=\delta_{\lambda_1}(p)$. Using the fact that $V_{B_2}(0,\rho) = V_{B_r}(0,\rho)$, we have
\begin{equation}
\begin{array}{ll}
 & V_{B_2}(0,\rho)-V_{B_r}(0,\rho) \\
=& \rho(R_h+C_h)-C_h+\beta \delta_{\lambda_0}(\rho) = 0
\end{array}
\end{equation}
\begin{equation}
\rho = \frac{C_h-\beta\delta_{\lambda_0}(\rho)}{R_h+C_h}.
\end{equation}

Using the results in Theorem 2, we can easily derive similar structure on the other three edges. The structure of the optimal policy on the boundary of the belief space is shown in Fig.1
\begin{figure}[htbp]
\label{fig:boundary}
\centering
\includegraphics[width=0.3\textwidth,natwidth=381,natheight=362]{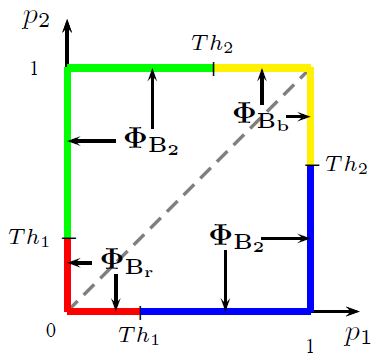}
\caption{Structure of the optimal policy on the boundary of belief space} 
\end{figure}
The thresholds $Th_1$ and $Th_2$ in Figure 1 are given by
\begin{equation}
\left\{ \begin{array}{l}
Th_1 = \frac{C_h-\beta \delta_{\lambda_0}(Th_1)}{R_h+C_h} \\
Th_2 = \frac{(R_h-R_l)+C_l-\beta \delta_{\lambda_1}(Th_2)}{R_l+C_l}
\end{array}. \right.
\end{equation}

 A simple threshold structure on each edge is clear from Figure 1. Next we will derive the structure of the optimal policy in the whole belief space.

\begin{theorem}
$\Phi_{a}$ is a simple connected region extended from $d_a$ in the belief space $([0,1],[0,1])$, where
\begin{displaymath}
d_a = \left\{ \begin{array}{ll}
 (0,0) & a=B_r \\
 (0,1) & a=B_2 \\
 (1,0) & a=B_1 \\
 (1,1) & a=B_b
\end{array}, \right.
\end{displaymath}
and $\forall (p_1,p_2) \in \Phi_{B_1}$, $p_1 \ge p_2$; $\forall (p_1,p_2) \in \Phi_{B_2}$, $p_1 \le p_2$.
\end{theorem}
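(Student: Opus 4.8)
This statement combines two claims: (i) each decision region $\Phi_a$ is connected and ``extended from'' its corner $d_a$; and (ii) $\Phi_{B_1}\subseteq\{p_1\ge p_2\}$ while $\Phi_{B_2}\subseteq\{p_1\le p_2\}$. I would prove (ii) first, since it is the harder of the two and it localizes $\Phi_{B_1}$ and $\Phi_{B_2}$, which (i) then exploits. By Theorem 2, $(p_1,p_2)\in\Phi_{B_1}$ if and only if $(p_2,p_1)\in\Phi_{B_2}$, so the two containments in (ii) are equivalent; it then suffices to show $\Phi_{B_1}\subseteq\{p_1\ge p_2\}$, that is, that $p_1<p_2$ forces $(p_1,p_2)\notin\Phi_{B_1}$. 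Since $V(p_1,p_2)\ge V_{B_2}(p_1,p_2)$, it is enough to prove the action-value inequality $V_{B_1}(p_1,p_2)<V_{B_2}(p_1,p_2)$ whenever $p_1<p_2$.

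The route to this inequality is the identity $V_{B_1}(p_1,p_2)=V_{B_2}(p_2,p_1)$, which follows from the symmetry of $V$ (Lemma 3) in the same way it is used in the proof of Theorem 2 (its finite-horizon form already appears inside the proof of Lemma 3); the target then reads $V_{B_2}(b,a)<V_{B_2}(a,b)$ for $a:=p_1<b:=p_2$. Expanding $V_{B_2}$ via (\ref{eqn:VB2}) and subtracting, I would rewrite $V_{B_2}(a,b)-V_{B_2}(b,a)$ as a sum of three pieces: a direct-reward term $(R_h+C_h)(b-a)>0$; a ``drop'' term $\beta\big[(1-b)\big(V(T(a),\lambda_0)-V(T(b),\lambda_0)\big)+b\big(V(T(a),\lambda_1)-V(T(b),\lambda_1)\big)\big]$, which is nonpositive (as $T(a)<T(b)$ and $V$ is nondecreasing in its first argument) and of magnitude at most $\alpha(b-a)$ times the Lipschitz constant $M$ of $V$ in its first coordinate; and an ``advantage'' term $\beta(b-a)\big(V(T(b),\lambda_1)-V(T(b),\lambda_0)\big)\ge0$. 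To have these estimates available I would first run a finite-horizon induction in the style of the proof of Lemma 3, showing that $V$ (and every $V^n$) is nondecreasing in each coordinate and Lipschitz in each coordinate with a constant $M\le(R_h+C_h)+\beta\alpha M$, while carrying $\Phi_{B_1}^{n}\subseteq\{p_1\ge p_2\}$ and $\Phi_{B_2}^{n}\subseteq\{p_1\le p_2\}$ in the induction hypothesis; the case $n=1$ is immediate from the closed form of $V^1$ written in the proof of Lemma 3.

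For part (i) I would combine Theorem 1 with the boundary analysis already carried out before the theorem. Theorem 2 together with the single-threshold structure found on the edge $p_1=0$ yields the analogous thresholds on all four edges, so $\Phi_a$ meets $\partial([0,1]^2)$ in one connected arc through $d_a$; for $a=B_1$ this arc is $\big([\rho,1]\times\{0\}\big)\cup\big(\{1\}\times[0,Th_2]\big)$, which lies in $\{p_1\ge p_2\}$. Given an arbitrary $(p_1,p_2)\in\Phi_{B_1}$, part (ii) confines it to the triangle $0\le p_2\le p_1\le1$; using contiguity of $\Phi_{B_1}$ in the $p_2$-direction (Theorem 1) I would show that lowering $p_2$ to $0$ keeps the point in $\Phi_{B_1}$ --- the only action that can occur on that descent is $B_1$ itself, because $B_2$ is excluded by (ii) and $B_b,B_r$ are excluded by the threshold structure on the relevant edge together with contiguity --- reaching $(p_1,0)\in\Phi_{B_1}$ with $p_1\ge\rho$, and then moving along $p_2=0$ to $d_{B_1}=(1,0)$ by contiguity in $p_1$. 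This exhibits a monotone staircase path inside $\Phi_{B_1}$ joining any point to $d_{B_1}$, so $\Phi_{B_1}$ is a connected region extended from $(1,0)$; $\Phi_{B_2},\Phi_{B_b},\Phi_{B_r}$ are treated the same way (the four regions being handled by a single simultaneous argument), with coordinates and corners permuted according to $d_a$.

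The main obstacle is the sign of the ``drop'' term in (ii). The crude estimate, with $M=(R_h+C_h)/(1-\beta\alpha)$, only gives $V_{B_2}(a,b)-V_{B_2}(b,a)\ge(b-a)(R_h+C_h)\frac{1-2\beta\alpha}{1-\beta\alpha}$, which is positive only for $\beta(\lambda_1-\lambda_0)<\frac12$. To reach the inequality for all parameter values one cannot discard the ``advantage'' term: it has to be lower-bounded (for instance, at $(T(b),\lambda_0)$ any optimal action other than $B_1$ or $B_r$ contributes a gain of order $\alpha(R_l+C_l)$ when the idle channel's belief is raised from $\lambda_0$ to $\lambda_1$), and then played off against a sharper, region-aware Lipschitz bound --- on $\{p_1<p_2\}$ the slope of $V$ in $p_1$ cannot come from action $B_1$, by the inductive containment, so it is governed by $R_l+C_l$ instead of $R_h+C_h$. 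Making these two estimates close against one another is the technical core of the proof.
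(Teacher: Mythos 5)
The central gap is in your part (ii). You reduce the containment $\Phi_{B_1}\subseteq\{p_1\ge p_2\}$ to the pointwise inequality $V_{B_2}(a,b)>V_{B_2}(b,a)$ for $a<b$, and then concede that your decomposition into a direct-reward term, a drop term and an advantage term only closes under the crude Lipschitz bound when $\beta(\lambda_1-\lambda_0)<\tfrac{1}{2}$; the promised refinement (lower-bounding the advantage term and playing it off against a region-aware Lipschitz constant) is described but never executed, and it is exactly the hard case. As written this is a proof for a restricted parameter range, not of the theorem. The paper never needs any such estimate: it obtains (ii) qualitatively from Theorem 2 together with connectedness. If some $(x,y)\in\Phi_{B_2}$ had $x>y$, then since $\Phi_{B_2}$ is a single connected region containing $(0,1)$ (above the diagonal) and is contiguous in both coordinates, the mirror point $(y,x)$ would also have to lie in $\Phi_{B_2}$; but Theorem 2 sends $(x,y)\in\Phi_{B_2}$ to $(y,x)\in\Phi_{B_1}$, a contradiction. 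Replacing your analytic estimate by this symmetry argument removes the obstacle entirely.

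For part (i) your staircase construction is a plausible alternative, but it also has an unproven step: on the vertical descent from $(p_1,p_2)\in\Phi_{B_1}$ to $(p_1,0)$ you assert that $B_b$ and $B_r$ are ``excluded by the threshold structure on the relevant edge together with contiguity,'' yet the edge structure constrains only the endpoint $(p_1,0)$, and contiguity of $\Phi_{B_1}$ in the $p_2$-direction can be invoked only after you know \emph{both} endpoints of the segment are in $\Phi_{B_1}$ --- which is what you are trying to establish. There is also a circularity risk in excluding $B_b$ on the descent, since the connectedness of $\Phi_{B_b}$ is part of the same theorem. The paper proceeds differently: it takes the minimal square $([0,th_1],[0,th_1])$ containing the component of $\Phi_{B_r}$ through $(0,0)$, rules out a second component in the off-diagonal blocks because a line $p_1=x$ would then meet $\Phi_{B_r}$ in two separated pieces (contradicting Theorem 1), and rules out a second component in $([th_1,1],[th_1,1])$ by comparing the slopes of the affine functions $V_{B_r}^{p_2=y}(p_1)$ and $V_{B_2}^{p_2=y}(p_1)$ together with their ordering at $p_1=0$. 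You would need either to adopt an argument of that kind or to supply the missing justification that every point of your descent segment actually lies in $\Phi_{B_1}$.
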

\begin{proof}
At the beginning of this section we already show that $d_a \in \Phi_a$, and from Theorem 1, $\Phi_a$ has at least one connected region extended from $d_a$. Therefore next we need to show that each $\Phi_a$  has only one connected region.

For $\Phi_{B_r}$, let $\Phi_{B_r}'$ denote the connected region extended from $(0,0)$, then we will show that there exists no other connected region $\Phi_{B_r}''$. Since $\Phi_{B_r}'$ is symmetric, let $([0,th_1],[0,th_1])$ be the minimum rectangle to include $\Phi_{B_r}'$(no other connected region in this rectangle)(Figure 2(a)). Suppose there is another connected region  $\Phi_{B_r}''$ in the area $([0,th_1],[th_1,1])$ or $([th_1,1],[0,th_1])$. Take the former for example, then we have $\forall (x,y) \in \Phi_{B_r}''$, line $p_1=x$ will pass across both $\Phi_{B_r}'$ and $\Phi_{B_r}''$, thus at least two separate parts of $\Phi_{B_r}$ exist on line $p_1=x$, which contradicts the result in theorem 1. Therefore, no connected region $\Phi_{B_r}''$ exists in area $([0,th_1],[th_1,1])$ or $([th_1,1],[0,th_1])$.

 Suppose another connected region $\Phi_{B_r}''$ exists in $([th_1,1],[th_1,1])$. Let $V_{a}^{p_2=y}(p_1)$ denote $V_{a}(p_1,p_2)$ when $p_2$ is a fixed value $y$. From lemma 1, the slope of line $V_{a}^{p_2=y}(p_1)$ is given by
\begin{equation}
\label{eqn:partialV}
\left\{ \begin{array}{l}
\frac{\partial V^{p_2=y}_{B_r}(p_1)}{\partial p_1} = \beta \frac{\partial V(T(p_1),T(y))}{\partial p_1} \\
\frac{\partial V^{p_2=y}_{B_2}(p_1)}{\partial p_1} = \beta \frac{\partial [(1-y)V(T(p_1),\lambda_0)+yV(T(p_1),\lambda_1)])}{\partial p_1}
\end{array}. \right.
\end{equation}
From (\ref{eqn:partialV}) we have $\frac{\partial V^{p_2=y}_{B_r}(p_1)}{\partial p_1} < \frac{\partial V^{p_2=y}_{B_2}(p_1)}{\partial p_1}$ and from the structure of optimal policy on the boundary of the belief space, we have $\forall (x,y) \in \Phi_{B_r}''$, we have $V_{B_2}(0,y) > V_{B_r}(0,y)$ (Figure 2 (b)).
\psset{unit=80pt}
\begin{figure}[htbp]
\centering
\includegraphics[width=0.5\textwidth,natwidth=914,natheight=449]{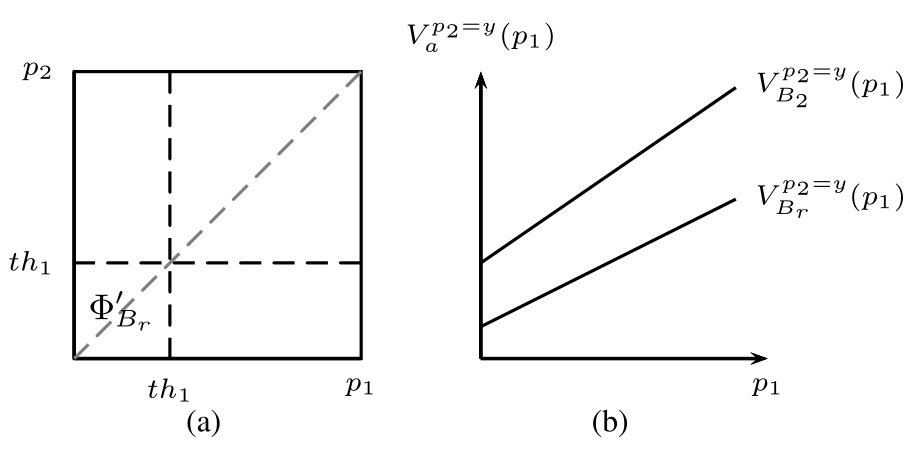}
\caption{(a)Belief space region segmentation (b)$V^{p2=y}_{a}(p_1)$} 
\end{figure}

It is clear from Fig.2(b) that there exists no $p_1=x$ such that $V^{p_2=y}_{B_2}(x)<V^{p_2=y}_{B_r}(x)$. Therefore $(x,y) \notin \Phi_{B_r}$, which contradicts our assumption that $(x,y) \in \Phi_{B_r}'' \subset \Phi_{B_r}$. From this we show there is no other connected region $\Phi_{B_r}''$ in the area $([th_1,1],[th_1,1])$. In other words, $\Phi_{B_r}'$ is the only connected region of $\Phi_{B_r}$.

We can prove that $\Phi_{B_1}, \Phi_{B_2}$ or $\Phi_{B_b}$ has only one connected region in a similar manner and the detail is omitted due to space limit.

Next we prove $\forall (p_1,p_2) \in \Phi_{B_2}$, $p_2 \ge p_1$. Obviously $\Phi_{B_2}$ has a connected region extended from $(0,1)$. If $\exists (x,y) \in \Phi_{B_2}$ and $x>y$, we can find a mirror point of $(x,y)$ with respect to line $p_1=p_2$ according to the convexity of $\Phi_{B_2}$. Then both points $(x,y)$ and $(y,x)$ belong to $\Phi_{B_2}$, which contradicts theorem 2. Hence $\forall (p_1,p_2) \in \Phi_{B_2}$, we have $p_1 \le p_2$. Similarly, $\forall (p_1,p_2) \in \Phi_{B_1}$, $p_1 \ge p_2$.

\end{proof}

When we prove the extended region of $\Phi_{B_b}$ in theorem 3 (refer to [18] for detail), two types of structures are found on the line $p_1=p_2$. (1) one threshold structure: $\exists 0<\rho_1<1$, such that $\forall y \in [0,\rho_1]$, $(y,y) \in \Phi_{B_r}$, and $\forall y \in [\rho_1,1]$, $(y,y) \in \Phi_{B_b}$. (2) two threshold structure: $\exists 0<\rho_1<\rho_2<1$, such that $\forall y \in [0,\rho_1]$, $(y,y) \in \Phi_{B_r}$; $\forall y \in [\rho_1,\rho_2]$, $(y,y) \in \Phi_{B_1}(\Phi_{B_2})$; and $\forall y \in [\rho_2,1]$,$(y,y) \in \Phi_{B_b}$. From  theorem 3, the structure of the optimal policy is illustrated in Figure 3.
\begin{figure}[!t]
\centering
\subfloat[1-threshold Structure]{
\begin{minipage}[t]{0.24\textwidth}
 \includegraphics[width=\textwidth,natwidth=290,natheight=270]{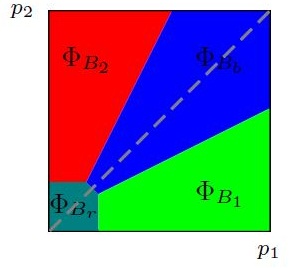}
\end{minipage}
}
\subfloat[2-threshold Structure]{
\begin{minipage}[t]{0.24\textwidth}
 \includegraphics[width=\textwidth,natwidth=290,natheight=270]{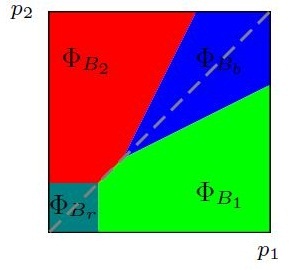}
\end{minipage}
}
\caption{The structure of optimal policy $\pi^{*}$} 
\end{figure}

For the one threshold structure, $\rho_1$ can be obtained by solving $V_{B_r}(\rho_1,\rho_1)=V_{B_b}(\rho_1,\rho_1)$. For the two threshold structure, $\rho_1$ and $\rho_2$ can be obtained by solving $V_{B_r}(\rho_1,\rho_1)=V_{B_1}(\rho_1,\rho_1)$ and $V_{B_1}(\rho_2,\rho_2)=V_{B_b}(\rho_2,\rho_2)$. Therefore for this power allocation problem, we are able to give the basic structure of the optimal policy and derive the thresholds on four edges and on the line $p_1=p_2$. However, so far we are unable to derive a closed form expression for the boundary of each $\Phi_a$.  In the next section, we will use simulation based on linear programming to construct the optimal policy and verify its features.

\section{Simulation Based on Linear Programming}
Linear programming is one of the approaches to solve the Bellman equation in (4). Based on [13], we model our problem as the following linear program:
\begin{equation}
\begin{array}{rll}
\min \sum_{\mathbf{p} \in \mathbb{X}}V(\mathbf{p}), & & \\
s.t. \ \ g_a(\mathbf{p}) &+& \beta \sum_{\mathbf{y} \in \mathbb{X}}f_a(\mathbf{p},\mathbf{y})V(\mathbf{y}) \le V(\mathbf{p}), \\
& &\forall \mathbf{p} \in \mathbb{X}, \forall a \in \mathbb{A}_{\mathbf{p}}
\end{array}
\end{equation}
where $\mathbb{X}$ denotes the belief space, $\mathbb{A}_{\mathbf{p}}$ is the set of available actions for state $\mathbf{p}$. The state transition probability $f_a(\mathbf{p},\mathbf{y})$ is the probability that the next state will be $\mathbf{y}$ when the current state is $\mathbf{p}$ and the current action is $a \in \mathbb{A}_{\mathbf{p}}$. The optimal policy is given by
\begin{equation}
\pi(\mathbf{p}) = \arg \max_{a \in \mathbb{A}_{\mathbf{p}}} (g_a(\mathbf{p}) + \beta \sum_{\mathbf{y} \in \mathbb{X}}f_a(\mathbf{p},\mathbf{y})V(\mathbf{p}) ).
\end{equation}

We used the LOQO solver on NEOS Server [14] with AMPL input [15] to obtain the solution of equation (37). Then we used MATLAB to construct the policy according to equation (38).

\begin{figure}[!t]
\centering
\subfloat[1-threshold form]{
\begin{minipage}[t]{0.24\textwidth}
 \includegraphics[width=\textwidth,natwidth=351,natheight=381]{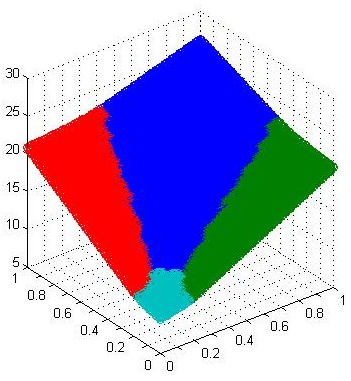} \\
 \includegraphics[width=\textwidth,natwidth=393,natheight=385]{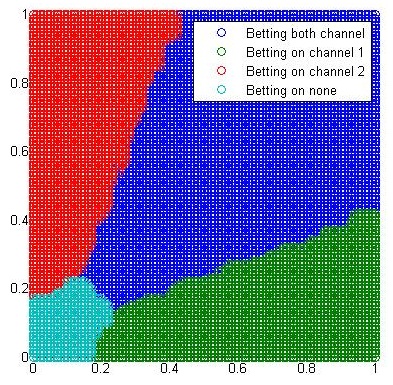}%
\end{minipage}
}
\subfloat[2-threshold form]{
\begin{minipage}[t]{0.24\textwidth}
 \includegraphics[width=\textwidth,natwidth=351,natheight=381]{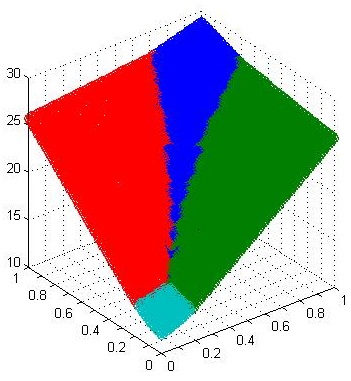}\\
 \includegraphics[width=\textwidth,natwidth=390,natheight=380]{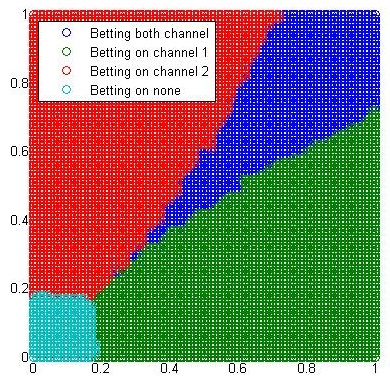}%
\end{minipage}
}
\caption{Value function and Structure of optimal policy}
\end{figure}

Figure 4 shows the AMPL solution of the value function and the corresponding optimal policy. In Fig 4(a), we use the following set of parameters: $\lambda_0=0.1, \lambda_1=0.9, \beta=0.9, R_h/R_l=3/2=1.5, C_h/C_l=1.2/0.8=1.5$ and the ``1-threshold structure'' of the optimal policy is observed; In Fig 4(b), we use the same set of parameters as in (a) except $R_h/R_l=3.7/2=1.85$ and the ``2-threshold structure'' of the optimal policy is observed. The optimal policy in Figure 4 clearly shows the properties we gave in Section 3.

For our power allocation problem, it is interesting to investigate the effect of parameters (such as $\lambda_0,\lambda_1,R_h,R_l,C_h$ and $C_l$) on the structure of optimal policy. For this purpose, we conducted simulation experiments with varying parameters. First, we increase $\lambda_0$ from 0.1 to 0.8 while keeping the rest of parameters the same as in experiment in Fig 4(a). Let $|\Phi_a|$ denote the area of $\Phi_a$ in total belief space and we normalize all $\Phi_a$ with total belief space as $|\Phi_a|/|\mathbb{X}|$. Fig 5(a) shows how the normalized $\Phi_a$ changes with different $\lambda_0$. We can observe in Fig 5(a) that initially $\Phi_{B_b}$ has the biggest area with $\lambda_0=0.1$. When gradually increasing $\lambda_0$, $\Phi_{B_b}$ becomes smaller, whilst $\Phi_{B_1}(\Phi_{B_2})$ and $\Phi_{B_r}$ become bigger. When $\lambda_0 \ge 0.5$, $\Phi_{B_1}(\Phi_{B_2})$ occupies the major part of the belief space, meaning that $\lambda_0$ is big enough and it is more optimal to ``gamble'' on one channel. Similarly, Fig 5(b) shows the results when we decrease $\lambda_1$ from 0.9 to 0.2. $|\Phi_{B}|$ changes in a similar manner as in Figure 5(a). We can see that only when $\lambda_1$ is as small as 0.3, $|\Phi_{B_1}|(|\Phi_{B_2}|)$ is bigger than $|\Phi_{B_b}|$. Interestingly, $|\Phi_{B_r}|$ is always the smallest in both experiments, which means the system likes ``gambling'' instead of ``being conservative''.

\begin{figure}[!t]
\centering
\subfloat[Normalized $\Phi_a$ with increasing $\lambda_0$]{
\begin{minipage}[t]{0.24\textwidth}
 \includegraphics[width=\textwidth,natwidth=1060,natheight=822]{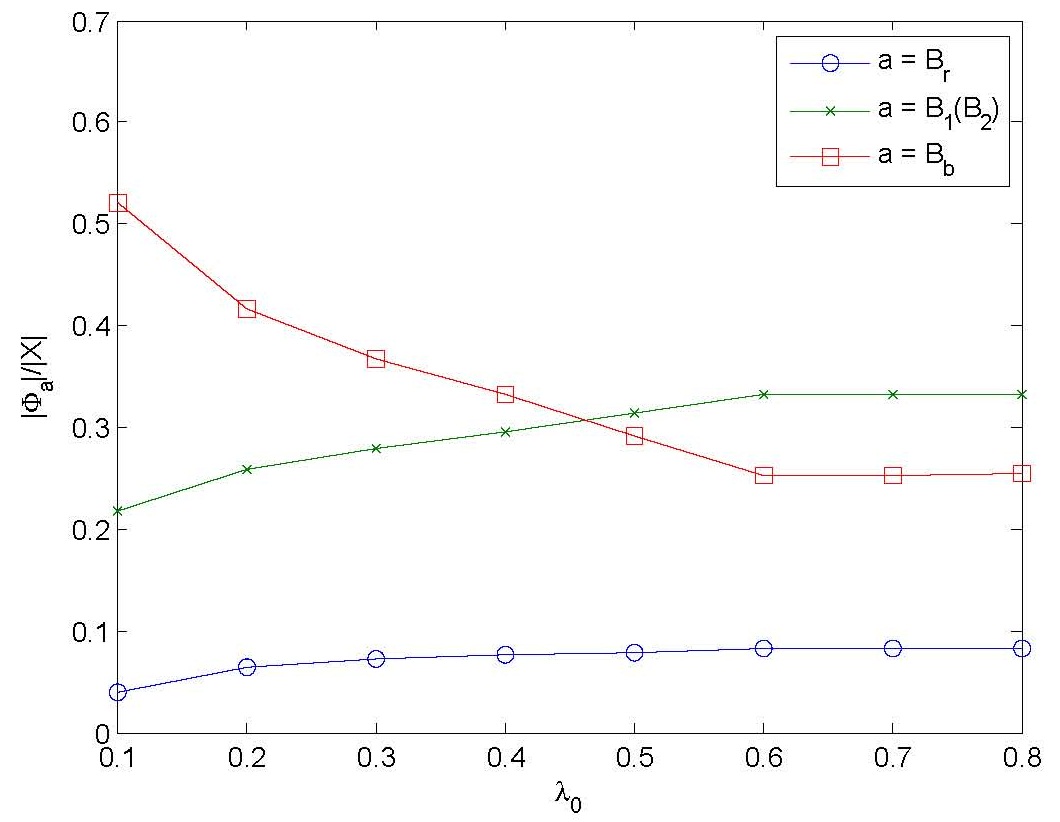}
\end{minipage}
}
\subfloat[Normalized $\Phi_a$ with decreasing $\lambda_1$]{
\begin{minipage}[t]{0.24\textwidth}
 \includegraphics[width=\textwidth,natwidth=1042,natheight=822]{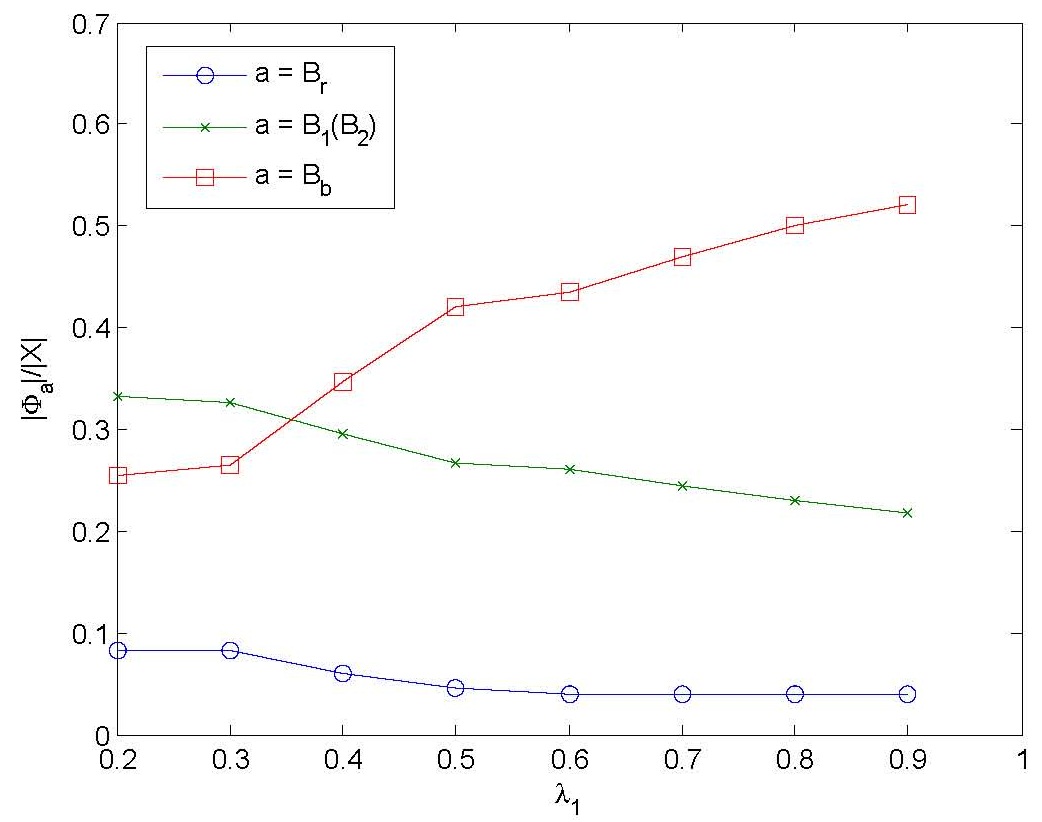}
\end{minipage}
}
\caption{Normalized $\Phi_a$($R_h/R_l=3/2, C_h/C_l=1.2/0.8$)}
\end{figure}

 In Figure 4 we already observed that different $R_h/R_l$ ratio leads to different structure of optimal policy (1-threshold or 2-threshold structure). Therefore we believe the optimal policy is closely related to the immediate reward and loss of the four actions. And we believe the ratio of $R_h/R_l$ and $C_h/C_l$ has more effect on the structure of optimal policy than their real value. Therefore, in the next experiment we increase the ratio of $R_h/R_l$ with different $C_h/C_l$.

\begin{figure}[!t]
\centering
\subfloat[$\Phi_{B_r}$ with increasing $R_h/R_l$]{
\begin{minipage}[t]{0.24\textwidth}
 \includegraphics[width=\textwidth,natwidth=1062,natheight=822]{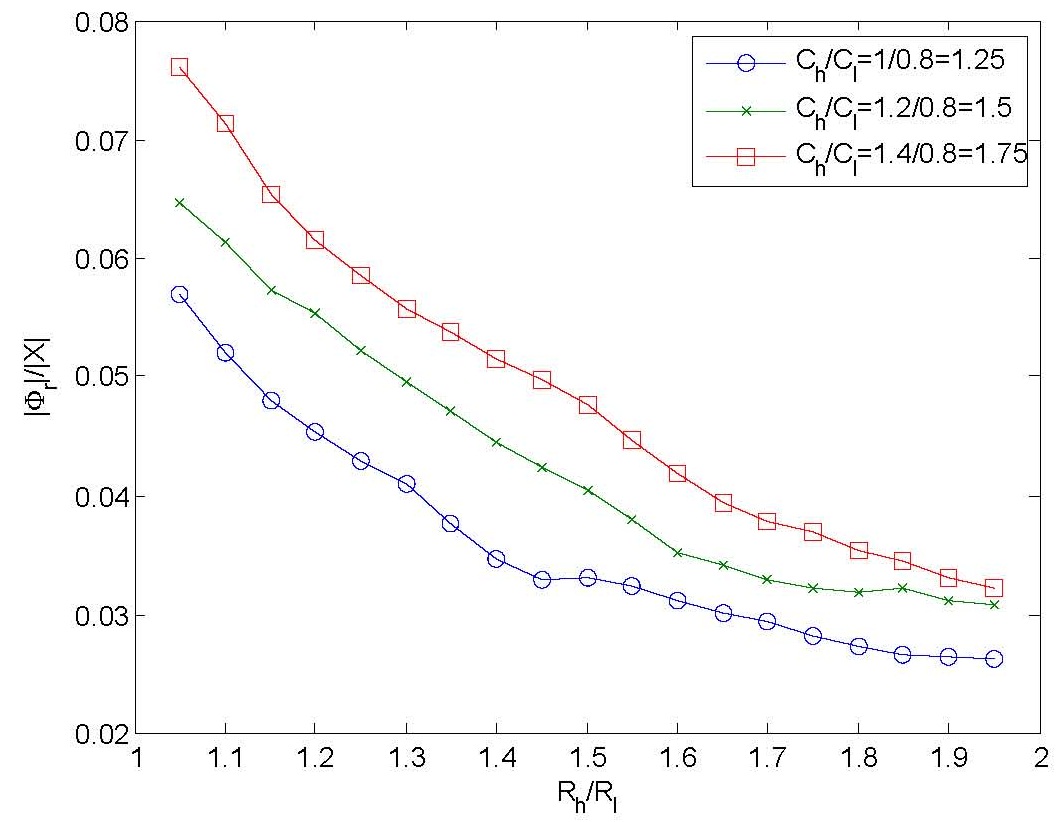}
\end{minipage}
}
\subfloat[$\Phi_{B_r}$ with increasing $C_h/C_l$]{
\begin{minipage}[t]{0.24\textwidth}
 \includegraphics[width=\textwidth,natwidth=1062,natheight=822]{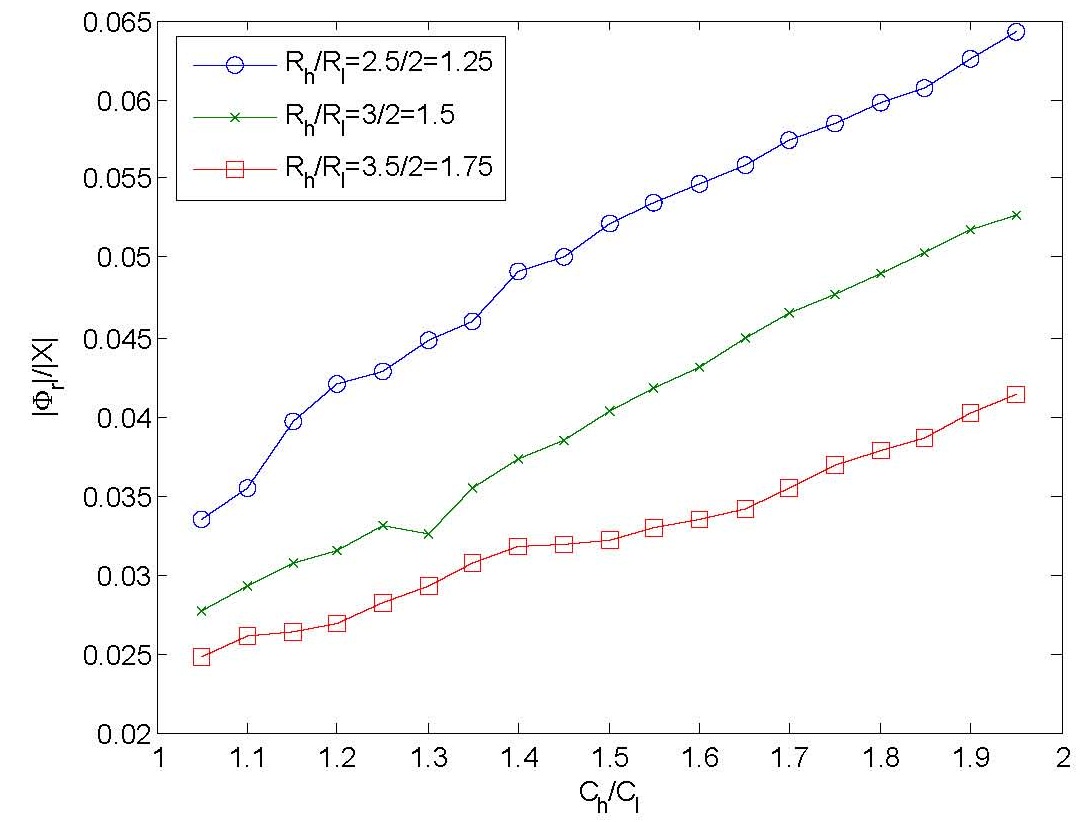}
\end{minipage}
}
\\
\subfloat[$\Phi_{B_1}$ with increasing $R_h/R_l$]{
\begin{minipage}[t]{0.24\textwidth}
 \includegraphics[width=\textwidth,natwidth=1062,natheight=822]{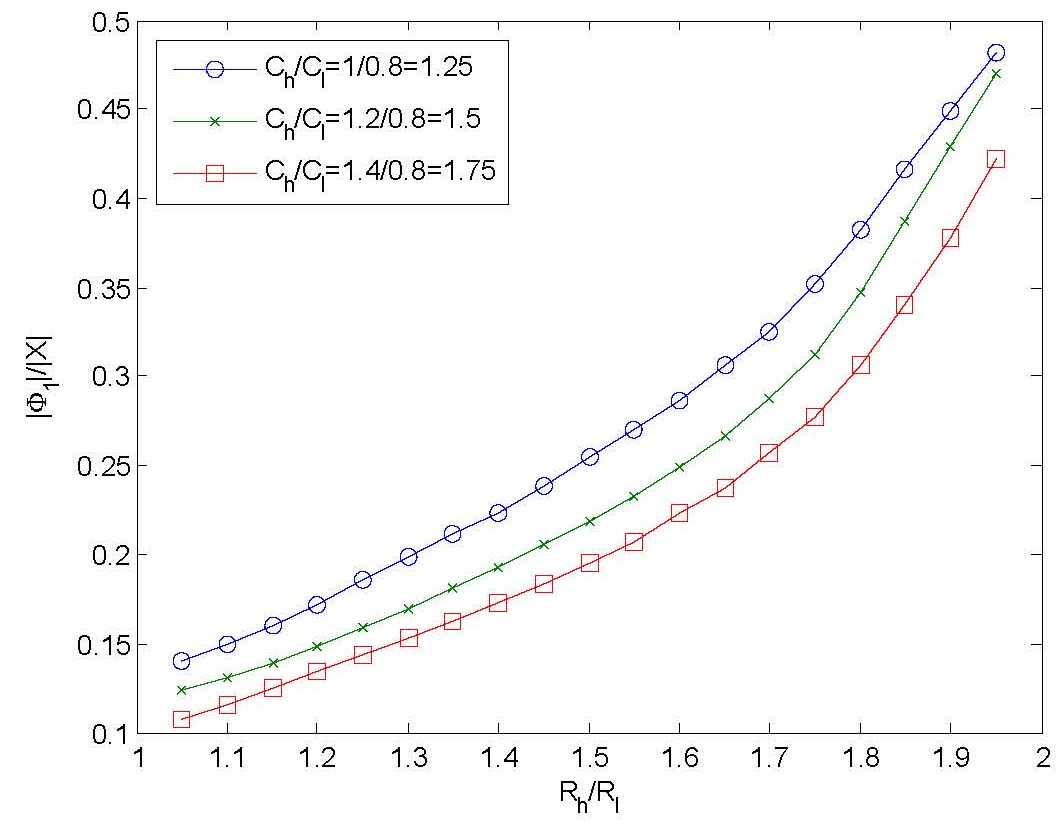}
\end{minipage}
}
\subfloat[$\Phi_{B_1}$ with increasing $C_h/C_l$]{
\begin{minipage}[t]{0.24\textwidth}
 \includegraphics[width=\textwidth,natwidth=1062,natheight=822]{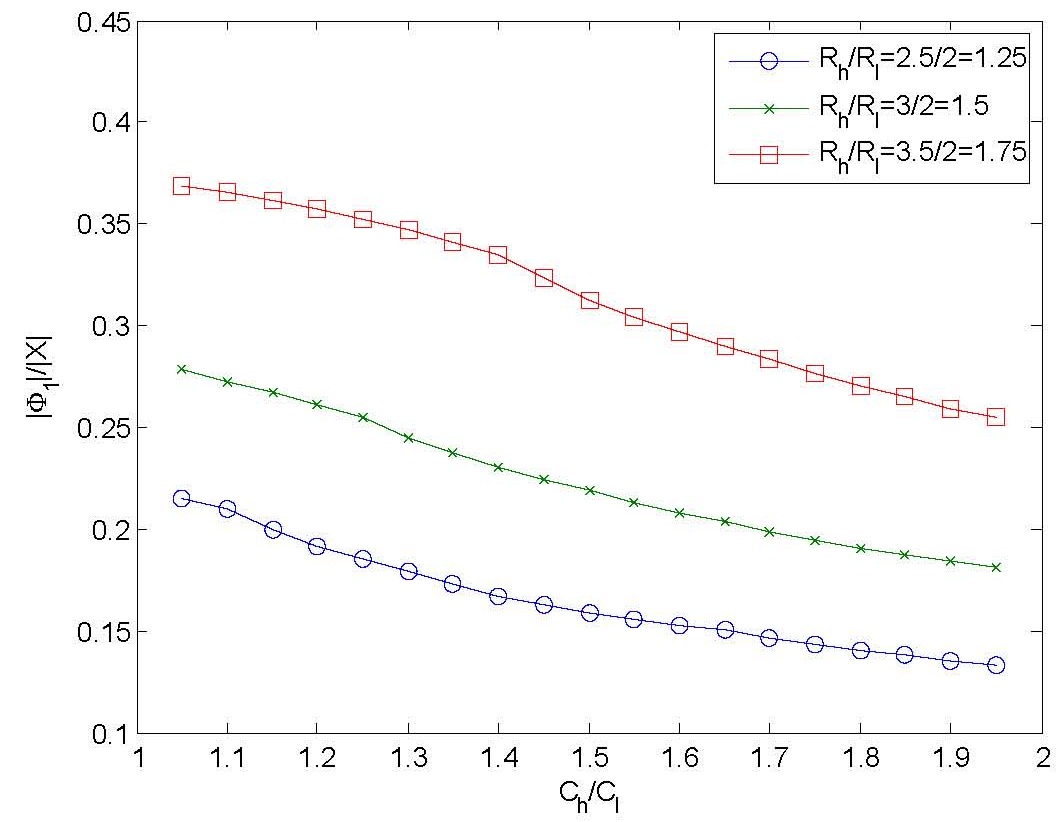}
\end{minipage}
}
\\
\subfloat[$\Phi_{B_b}$ with increasing $R_h/R_l$]{
\begin{minipage}[t]{0.24\textwidth}
 \includegraphics[width=\textwidth,natwidth=1062,natheight=822]{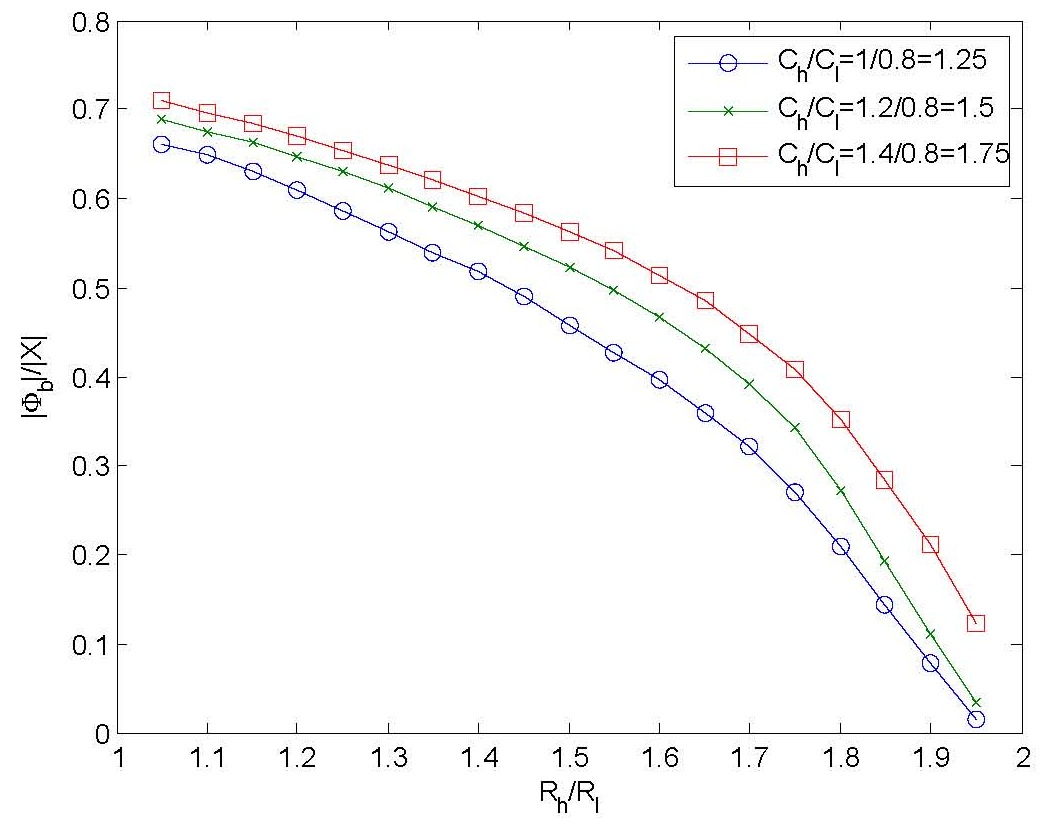}
\end{minipage}
}
\subfloat[$\Phi_{B_b}$ with increasing $C_h/C_l$]{
\begin{minipage}[t]{0.24\textwidth}
 \includegraphics[width=\textwidth,natwidth=1062,natheight=822]{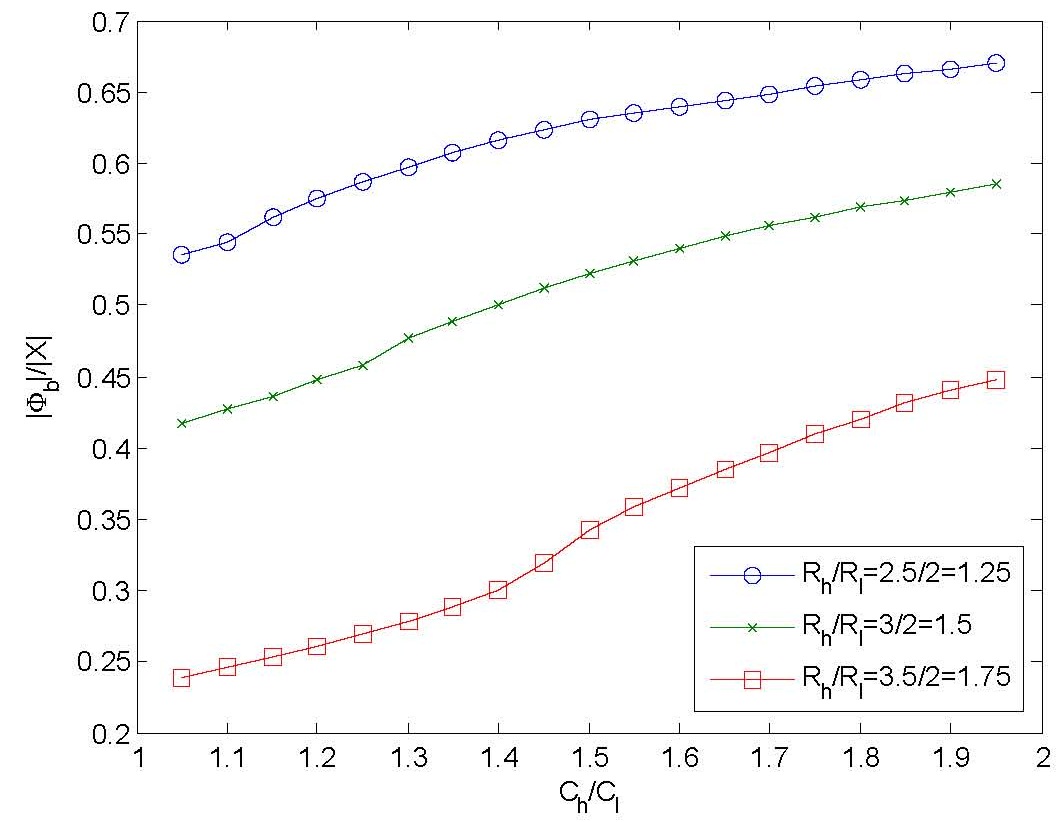}
\end{minipage}
}
\caption{Normalized $\Phi_{a}$ ($\lambda_0=0.1,\lambda_1=0.9$)}
\end{figure}

Figure 6(a)(c)(e) show the normalized $\Phi_{B_r},\Phi_{B_1},\Phi_{B_b}$ with increasing $R_h/R_l$ from 1.05 to 1.95. We can see that when $R_h/R_l$ increases, $|\Phi_{B_r}|$ and $|\Phi_{B_b}|$ become smaller while $|\Phi_{B_1}|$ grows bigger, meaning that the immediate reward of using one channel($R_h$) is big enough to justify ``gambling''. Similarly Figure 6(b)(d)(f) show the normalized $\Phi_{B_r},\Phi_{B_1},\Phi_{B_b}$ with increasing $C_h/C_l$ from 1.05 to 1.95. In contrast to Figure 6(a)(c)(e), when $C_h/C_l$ increases, $|\Phi_{B_r}|$ and $|\Phi_{B_b}|$ grows bigger while $|\Phi_{B_1}|$ becomes smaller, meaning that the the immediate loss of using a channel($C_h$) is big enough and the system decides to ``play safe''.

From above observation we understand that ``1-threshold structure'' may occur with small $R_h/R_l$, big $C_h/C_l$ and $\lambda_1-\lambda_0$; ``2-threshold structure'' may occur with big $R_h/R_l$, small $C_h/C_l$ and $\lambda_1-\lambda_0$. Figure 7 verifies our speculation. We can see that in all experiments with a wide range of parameters, no other policy structure than 1-threshold and 2-threshold structure is observed. So we can conclude that with the help of linear-programming simulation, once the parameters ($\lambda_0,\lambda_1,R_h,R_l,C_h,C_l,\beta$) are known, the structure of optimal policy can be derived like in Figure 7(a)(b).

\begin{figure}[!t]
\centering
\subfloat[1-threshold structure]{
\begin{minipage}[t]{0.24\textwidth}
 \includegraphics[width=\textwidth,natwidth=388,natheight=382]{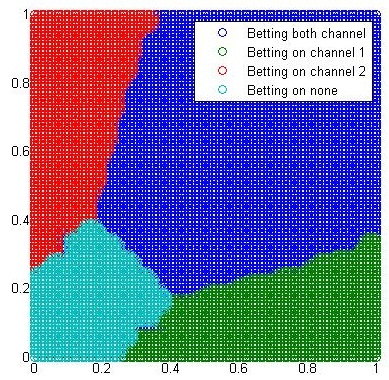}
\end{minipage}
}
\subfloat[2-threshold structure]{
\begin{minipage}[t]{0.24\textwidth}
 \includegraphics[width=\textwidth,natwidth=391,natheight=380]{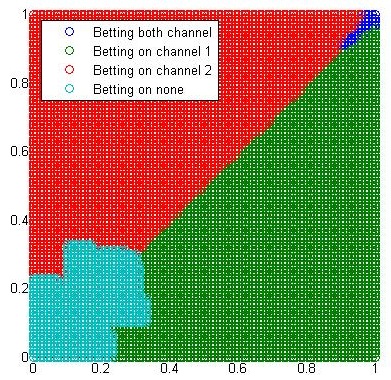}
\end{minipage}
}
\caption{(a)$R_h/R_l=1.25,C_h/C_l=1.95,\lambda_0=0.1,\lambda_1=0.9$; (b)$R_h/R_l=1.95,C_h/C_l=1.5,\lambda_0=0.4,\lambda_1=0.6$}
\end{figure}

\section{Conclusion}
In this paper we have derived the structure of optimal policy for our power allocation problem by theoretical analysis and simulation. We have given the structure of optimal policy on total belief space and proved that the optimal policy for this problem has a 1 or 2 threshold structure. With the help of linear programming, we can derive the optimal policy with key parameters. Further, we would like to find a closed form expression for the boundary of action region. Also, we would like to investigate the case of non-identical channels like [16], or derive useful results for more than 2 channels.

\section*{Acknowledgment}
 This work is partially supported by Natural Science Foundation of China under grant 61071081 and 60932003. This research was also sponsored in part by the U.S. Army Research Laboratory under the Network Science Collaborative Technology Alliance, Agreement Number W911NF-09-2-0053, and by the Okawa Foundation, under an Award to support research on ``Network Protocols that Learn".




%

\end{document}